\newcommand{\defeq}{\vcentcolon=}
\def\identity{\leavevmode\hbox{\small1\kern-3.8pt\normalsize1}}
\newtheorem{mydef}{Definition}
\renewcommand{\epsilon}{\varepsilon}
\newtheorem{definition}{Definition} 
\newtheorem{definition2}{Definition}
\newtheorem{prop}[definition2]{Proposition}
\newtheorem{lemma}[definition]{Lemma}
\newtheorem{thm}[definition]{Theorem}
\newtheorem{obs}{Observation}
\newtheorem{claim}[definition2]{Claim}
\newtheorem*{example}{Example}
\newtheorem*{rep@theorem}{\rep@title}
\newcommand{\newreptheorem}[2]{%
\newenvironment{rep#1}[1]{%
 \def\rep@title{#2 \ref{##1} (restatement)}%
 \begin{rep@theorem}}%
 {\end{rep@theorem}}}
\def\ba#1\ea{\begin{align}#1\end{align}}
\def\ban#1\ean{\begin{align*}#1\end{align*}}
\newcommand{\be}{\begin{equation}}
\newcommand{\ee}{\end{equation}}
\def\benum{\begin{enumerate}}
\def\eenum{\end{enumerate}}
\def\squareforqed{\hbox{\rlap{$\sqcap$}$\sqcup$}}
\def\qed{\ifmmode\squareforqed\else{\unskip\nobreak\hfil
\penalty50\hskip1em\null\nobreak\hfil\squareforqed
\parfillskip=0pt\finalhyphendemerits=0\endgraf}\fi}
\def\endenv{\ifmmode\;\else{\unskip\nobreak\hfil
\penalty50\hskip1em\null\nobreak\hfil\;
\parfillskip=0pt\finalhyphendemerits=0\endgraf}\fi}
\newcommand{\<}{\langle}
\renewcommand{\>}{\rangle}
\def\be{\begin{equation}}
\def\ee{\end{equation}}
\def\ben{\begin{eqnarray}}
\def\een{\end{eqnarray}}
\def\bei{\begin{itemize}}
\def\eei{\end{itemize}}
\mathchardef\ordinarycolon\mathcode`\:
\def\vcentcolon{\mathrel{\mathop\ordinarycolon}}
\newcommand{\nc}{\newcommand}
 \nc{\proj}[1]{|#1\rangle\!\langle #1 |} 
\nc{\avg}[1]{\langle#1\rangle}
\nc{\conv}{\operatorname{conv}}
\nc{\smfrac}[2]{\mbox{$\frac{#1}{#2}$}} \nc{\Tr}{\operatorname{Tr}}
\nc{\ox}{\otimes} \nc{\dg}{\dagger} \nc{\dn}{\downarrow}
\nc{\lmax}{\lambda_{\text{max}}}
\nc{\lmin}{\lambda_{\text{min}}}
\nc{\csupp}{{\operatorname{csupp}}}
\nc{\qsupp}{{\operatorname{qsupp}}} \nc{\var}{\operatorname{var}}
\nc{\rar}{\rightarrow} \nc{\lrar}{\longrightarrow}
\nc{\poly}{\operatorname{poly}}
\nc{\polylog}{\operatorname{polylog}} \nc{\Lip}{\operatorname{Lip}}
\nc{\Om}{\Omega}
\nc{\wt}[1]{\widetilde{#1}}
\def\>{\rangle}
\def\<{\langle}
\def\a{\alpha}
\nc{\glneq}{{\raisebox{0.6ex}{$>$}  \hspace*{-1.8ex} \raisebox{-0.6ex}{$<$}}}
\nc{\gleq}{{\raisebox{0.6ex}{$\geq$}\hspace*{-1.8ex} \raisebox{-0.6ex}{$\leq$}}}
\nc{\vholder}[1]{\rule{0pt}{#1}}
\nc{\wh}[1]{\widehat{#1}}
\nc{\h}[1]{\widehat{#1}}
\nc{\ob}[1]{#1}
\def\beq{\begin {equation}}
\def\eeq{\end {equation}}
\def\be{\begin{equation}}
\def\ee{\end{equation}}
\nc{\eq}[1]{(\ref{eq:#1})} 
\nc{\eqs}[2]{\eq{#1} and \eq{#2}}
\nc{\eqn}[1]{Eq.~(\ref{eqn:#1})}
\nc{\eqns}[2]{Eqs.~(\ref{eqn:#1}) and (\ref{eqn:#2})}
\nc{\region}{\cS\cW}
\newenvironment{protocol*}[1]
  {
    \begin{center}
      \hrulefill\\
      \textbf{#1}
  }
  {
    \vspace{-1\baselineskip}
    \hrulefill
    \end{center}
  }
\begin{document}

\title{Strong monogamies of no-signaling violations for bipartite correlation Bell inequalities}
\author{Ravishankar \surname{Ramanathan}}
\email{ravishankar.r.10@gmail.com}
\affiliation{National Quantum Information Center of Gdansk,  81-824 Sopot, Poland}
\affiliation{University of Gdansk, 80-952 Gdansk, Poland}
\author{Pawe{\l} \surname{Horodecki}}
\affiliation{National Quantum Information Center of Gdansk, 81-824 Sopot, Poland}
\affiliation{Faculty of Applied Physics and Mathematics, Technical University of Gdansk, 80-233 Gdansk, Poland}

\begin{abstract}
The phenomenon of monogamy of Bell inequality violations is interesting both from the fundamental perspective as well as in cryptographic applications such as the extraction of randomness and secret bits. 
In this article, we derive new and stronger monogamy relations for violations of Bell inequalities in general no-signaling theories. These relations are applicable to the class of binary output correlation inequalities known as XOR games, and to a restricted set of unique games. In many instances of interest, we show that the derived relation provides a significant strengthening over previously known results. The result involves a shift in paradigm towards the importance in monogamies of the number of inputs of one party which lead to a contradiction from local realistic predictions.


\end{abstract}
\maketitle

{\it Introduction.} 
The violation of a Bell inequality is a defining feature of quantum correlations that distinguishes them from classical correlations. Apart from the fundamental interest in the absence of a local hidden variable description of Nature, this feature of quantum theory has led to numerous applications in communication protocols, including key distribution \cite{Ekert, BHK} and generation of randomness \cite{Renner, Pironio}. Many interesting properties of this phenomenon dubbed non-locality have been discovered leading to a better understanding of the set of quantum correlations \cite{Masanes, distillation} which is crucial in the development of further applications. It is known that the set of quantum correlations is a convex set sandwiched between the classical polytope and the no-signaling polytope which is the set of correlations obeying the so-called no-signaling principle (impossibility of faster-than-light communication). 

The violation of Bell inequalities in general no-signaling theories (which includes quantum theory) displays a very interesting property called monogamy. Strong non-local correlations between two parties in extreme cases lead to weak correlations between these parties and any other no-signaling system. Specifically, there are instances where the violation of a Bell inequality by Alice and Bob precludes its violation by Alice and any other party Charlie, when Alice uses the same measurement results in both experiments. This phenomenon is seen to be important in secure communication protocols for key distribution or randomness generation between these parties, due to the fact that any third party such as an eavesdropper is only able to establish weak correlations with their systems \cite{Renner, Kent, Remig}. 

The monogamy of no-signaling correlations was first discovered for the CHSH inequality in \cite{Toner} and has since been shown to be a generic feature of all no-signaling theories \cite{Masanes}. A general monogamy relation applicable to any no-signaling correlations in the two-party Bell scenario that only involves the number of settings (inputs) for one party was developed in \cite{Terhal, Pawlowski}. This approach follows the idea of symmetric extensions and shareability of correlations which can also be recast in terms of the existence of joint probability distributions \cite{Kaszlikowski}. See also \cite{Brandao} for a novel method to deriving monogamies based on the powerful machinery of de Finetti theorems for no-signaling probability distributions. In this paper, we follow a different approach and derive a strengthened version of the monogamy relation for two-party inequalities that holds in many flagship scenarios involving correlation expressions, in particular to the wide class of binary output correlation scenarios known as XOR games and to a restricted set of the so-called unique games. We also investigate the validity of the monogamy relation and the strengthening it provides in many relevant scenarios. The result highlights the importance in monogamies of the number of settings of one party that lead to a contradiction with local realistic predictions.




{\it Monogamy relations for Bell inequality violations from no-signaling constraints.}
The general two-party Bell inequality is written as 
\begin{eqnarray}
&&\mathcal{B}_{AB} \defeq \mathbb{B}_{AB}.\{P(a,b|x,y)\} = \nonumber \\ &&\sum_{x=1}^{m_A} \sum_{y=1}^{m_B}  \sum_{a, b = 1}^{d} \mu(x, y) V(a, b| x, y) P(a, b | x, y) \leq \mathit{R}_L(\mathbb{B}) \nonumber 
\end{eqnarray}
where $m_A (m_B)$ denotes the number of settings for Alice (Bob) and $d = d_A, d_B$ denotes the number of outcomes of both Alice and Bob, with redundant outcomes added if necessary to make $d_A = d_B$. The probability distribution with which the inputs are chosen is denoted by $\mu(x, y)$ ($0 \leq \mu(x,y) \leq 1$), in many situations this distribution is taken to be of product form $\mu(x,y) = \mu^A(x) \mu^B(y)$ reflecting the independence of Alice and Bob in choosing their measurement settings. The $V(a, b| x, y)$ picks out the probabilities $P(a,b|x,y)$ with appropriate coefficients that enter the Bell inequality. The bound $\mathit{R}_L(\mathbb{B})$ is the maximum attainable value of the Bell expression by local deterministic boxes and consequently by their mixtures. The Bell vector $\mathbb{B}_{AB}$ has entries $\mathbb{B}_{AB}(a,b,x,y) = \mu(x,y) V(a,b|x,y)$ and the no-signaling box $\{P(a,b|x,y)\}$ describes the output distributions for different inputs. All the considerations in this article are confined to two-party Bell inequalities, so we will skip the suffix $AB$ wherever it is not required. All the proofs are deferred to the Supplemental Material. 

We consider the scenario where Alice performs the Bell experiment simultaneously with many Bobs $B^{i}$, with $i \in [n]$ for some $n$. A no-signaling constraint across all parties is imposed on the box $P(a, b^1, \dots, b^n | x, y^1, \dots, y^n)$ with inputs $x \in [m_A], y^i \in [m_B]$ and outputs $a, b^i \in [d]$ $\; \; \forall i$, i.e. 
\begin{widetext}
\begin{eqnarray}
\label{no-signal}
\sum_{a=1}^{d} P(a, b^1, \dots, b^n|x, y^1, \dots, y^n) &=& \sum_{a'=1}^{d} P(a', b^1, \dots, b^n|x', y^1, \dots, y^n) \; \; \quad \forall b^i, y^i, x, x' \nonumber \\
\sum_{b^i = 1}^{d} P(a, b^1,\dots, b^i, \dots, b^n | x, y^1, \dots, y^i, \dots, y^n) &=&  \sum_{b'^i=1}^{d} P(a, b^1,\dots, b'^i, \dots b^n | x, y^1, \dots, y'^i, \dots, y^n) \; \; \forall a, x, y^i, y'^i, b^j, y^j (j \neq i). \nonumber \\
\end{eqnarray}
\end{widetext}

Building on the idea of symmetric extensions \cite{Terhal}, it was shown in \cite{Pawlowski} that a monogamy relation holds for arbitrary Bell inequalities in any no-signaling theory in the following form. 
\begin{thm}[TDS03, PB09]
Consider the scenario where Alice wants to perform a certain Bell experiment $\mathbb{B}$ with $K$ Bobs denoted $B^{1}, \dots, B^{K}$ simultaneously. The associated Bell expressions $\mathbb{B}_{AB^{i}}$ for $i \in [K]$ satisfy a monogamy relation when $K = m_B$ i.e. for any box $\{P(a, b^1, \dots, b^{m_B} | x, y^1, \dots, y^{m_B})\}$ with $x \in [m_A], y^i \in [m_B]$ and $a, b^i \in [d]$ $\; \forall i$, in any no-signaling theory it holds that
\begin{eqnarray}
\label{gen-mono}
&&\sum_{i=1}^{m_B} \mathcal{B}_{AB^{i}} = \sum_{i=1}^{m_B} \mathbb{B}_{AB^{i}}.\{P(a,b^1, \dots, b^{m_B} | x, y^1, \dots, y^{m_B})\} \nonumber \\ && \qquad \qquad  \qquad \qquad \qquad \qquad \qquad \qquad \leq m_B \mathit{R}_L(\mathbb{B}) .
\end{eqnarray}
\end{thm}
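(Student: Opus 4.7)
My plan is to derive the monogamy bound by distilling from the no-signaling joint distribution $P(a, b^1, \ldots, b^{m_B}|x, y^1, \ldots, y^{m_B})$ a finite collection of \emph{genuinely local} bipartite boxes whose Bell values average exactly to $\frac{1}{m_B}\sum_{i=1}^{m_B}\mathcal{B}_{AB^i}$. Each such box attains at most $\mathit{R}_L(\mathbb{B})$ by definition of the classical bound, and averaging the inequality then yields the claim.

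Concretely, for each bijection $\pi:[m_B]\to[m_B]$ I would freeze the Bobs' inputs to $y^i = \pi(i)$ and consider $Q_\pi(a, \vec{b}|x) := P(a, b^1, \ldots, b^{m_B}|x, \pi(1), \ldots, \pi(m_B))$ with $\vec{b} = (b^1, \ldots, b^{m_B})$. The first line of \eqref{no-signal} (no signaling from Alice to the Bobs) forces the marginal $Q_\pi(\vec{b}|x) = \sum_a Q_\pi(a, \vec{b}|x)$ to be independent of $x$, which is precisely what is needed to promote $\lambda := \vec{b}$ into a legitimate shared random variable. The local strategy is then: draw $\lambda$ from this $x$-independent marginal; Alice, on input $x$, samples $a$ from the conditional $Q_\pi(a|x, \vec{b})$; Bob, on input $y$, deterministically returns $b^{\pi^{-1}(y)}$. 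Invoking the second line of \eqref{no-signal} (no signaling between the Bobs), the resulting box agrees with the marginal of the original joint in which Bob $B^{\pi^{-1}(y)}$ is queried with $y$ and all other Bobs are marginalized out. Hence each permutation-indexed box is local and $\mathcal{B}^{\text{loc}}_\pi \leq \mathit{R}_L(\mathbb{B})$.

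The last ingredient is to choose $m_B$ bijections $\{\pi_j\}_{j=0}^{m_B-1}$ whose inverses form a Latin square, so that for every $(y,i)\in[m_B]^2$ exactly one index $j$ satisfies $\pi_j^{-1}(y) = i$; the cyclic shifts $\pi_j(i) = ((i+j-1)\bmod m_B)+1$ do the job. Summing the local bound over these $m_B$ bijections reconstructs $\sum_{i=1}^{m_B}\mathcal{B}_{AB^i}$ term by term and delivers $\sum_i\mathcal{B}_{AB^i}\leq m_B\,\mathit{R}_L(\mathbb{B})$. I anticipate the main conceptual obstacle to be carefully justifying that each $P^{\text{loc}}_\pi$ is truly local rather than merely no-signaling; this is exactly where the $x$-independence of $Q_\pi(\vec{b}|x)$ enters, and is the crux that separates this monogamy argument from a generic no-signaling bound.
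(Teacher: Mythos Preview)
Your argument is correct. Note, however, that the paper does not itself prove this theorem: it is quoted as a known result from \cite{Terhal,Pawlowski}, and the Supplemental Material supplies proofs only for Propositions~1 and~2, Observation~1, and Claim~3. What you have written is essentially the standard argument from \cite{Pawlowski}: freeze the Bobs' inputs along a Latin square of assignments, use no-signaling from Alice to the Bobs to promote the output string $\vec b$ into legitimate shared randomness, let the simulated Bob read off the component $b^{\pi^{-1}(y)}$, and then sum over the $m_B$ cyclic shifts to recover $\sum_i \mathcal{B}_{AB^i}$ exactly.

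It is worth remarking that the paper's own proofs (for Proposition~1 in particular) employ a refinement of the same redistribution idea. There the sum $\sum_i \mathcal{B}_{AB^i}$ is rewritten as $\sum_i \underline{\mathcal{B}}_i$, where in each $\underline{\mathcal{B}}_i$ only the $C_{\mathbb{B}}$ \emph{contradictory} settings of Bob are farmed out to the other parties (one setting apiece), rather than all $m_B$ settings as in your Latin-square construction. The additional work is then to show that the residual non-contradictory block $\underline{\mathcal{B}}^{nc}_i$ can be optimized by a local box even under a fixed Alice-marginal constraint (Lemma~\ref{local-marg-xor}); this is precisely where the XOR or unique-game structure is exploited. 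Your argument is thus the correct baseline, and the paper's contribution is to sharpen the number of Bobs from $m_B$ down to $C_{\mathbb{B}}+1$ for those special classes.
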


Notice that the above relation depends upon the inequality only in so far as the number of settings of Bob is concerned, in particular it is independent of $m_A, d$ and $\mu(x,y)$. Quantum theory being a no-signaling theory obeys Eq.(\ref{gen-mono}) and in particular instances is also known to satisfy more stringent monogamy relations \cite{Toner, Kurzynski} due to its additional structure. From here on,  we denote $\mathbb{B}_{AB}.\{P(a,b|x,y)\}$ by $\mathcal{B}$ and $\sum_{i=1}^{N} \mathbb{B}_{AB^{i}}.\{P(a,b^1, \dots, b^{N} | x, y^1, \dots, y^{N})\}$ as $\sum_{i=1}^{N} \mathcal{B}_{AB^{i}}$ for ease of notation. 

In this paper, we derive a no-signaling monogamy relation that is applicable to a wide class of bipartite inequalities involving correlation expressions. To this end, we consider a parameter we call the contradiction number characterizing the Bell expression. This quantity denotes the difference between the number of measurement settings of one party (in this case Bob) and the maximum number of their measurement settings up to which the optimal no-signaling value can be attained by a local deterministic box.

\begin{mydef}[Contradiction number]
\label{def-con}
For any Bell inequality $\mathcal{B}_{AB} \leq \mathit{R}_L \leq \mathit{R}_{NS}$,
denote by $S(\mathbb{B})$ the set of settings of party $B$ of minimum cardinality $\mathit{C}_{\mathbb{B}} \defeq |S({\mathbb{B}})|$ whose removal leads to the optimum no-signaling value being achieved by a local deterministic box i.e. $\mathcal{\underline{B}}^{nc}_{AB} \leq \mathit{\underline{R}}^{nc}_L(\mathbb{\underline{B}}^{nc}) = \mathit{\underline{R}}^{nc}_{NS}(\mathbb{\underline{B}}^{nc})$ where
\begin{eqnarray}
\label{sat-exp}
\mathcal{\underline{B}}^{nc}_{AB} \defeq \sum_{x=1}^{m_A} \sum_{y=1}^{m_B - \mathit{C}_{\mathbb{B}}} \mu(x, y) \sum_{a, b = 1}^{d} V(a, b| x, y) P(a, b | x, y)
\end{eqnarray}
and $\mathit{\underline{R}}^{nc}_L(\mathbb{\underline{B}}^{nc})$ is the optimum local value of the expression $\mathcal{\underline{B}}^{nc}_{AB}$ while $\mathit{\underline{R}}^{nc}_{NS}(\mathbb{\underline{B}}^{nc})$ is its optimum no-signaling value. We then call $\mathit{C}_{\mathbb{B}}$ as the \textsl{contradiction number} for the Bell inequality.
\end{mydef}

The monogamy relation that we derive is given in terms of the contradiction number of the inequality $\mathit{C}_{\mathbb{B}}$ as
\begin{eqnarray}
\label{mono-NS2}
\sum_{i = 1}^{\mathit{C}_{\mathbb{B}} + 1} \mathcal{B}_{AB^{i}} = \sum_{i = 1}^{\mathit{C}_{\mathbb{B}} + 1} &&\mathbb{B}_{AB^{i}}.\{P(a, b^1, \dots, b^{\mathit{C}_{\mathbb{B}}+1} | x, y^1, \dots, y^{\mathit{C}_{\mathbb{B}}+1})\} \nonumber \\ &&\leq (\mathit{C}_{\mathbb{B}} + 1) \mathit{R}_L(\mathbb{B}) .
\end{eqnarray}
Clearly, $C_{\mathbb{B}} \leq m_B - 1$ since a classical winning strategy always exists when considering only a single setting of Bob. As we shall see, in many instances of interest the inequality is strict so that the relation in Eq.(\ref{mono-NS2}) provides a significant strengthening to that in Eq.(\ref{gen-mono}). Note that the above definition could be modified to take into account either Alice or Bob in which case the monogamy relation will hold with the common party being the one that does not achieve the minimum cardinality.

{\it Monogamy of correlation Bell expressions.}
A natural generalization of the CHSH inequality to more inputs $m_A, m_B$ is 
to Bell inequalities involving the correlators $\mathbb{E}_{x,y}$ defined as \cite{Bancal}
\begin{equation}
\mathbb{E}_{x,y} = \sum_{k=0}^{d-1} \lambda_k P(a - b = k \; \texttt{mod} \; \texttt{d} | x, y), 
\end{equation}
with real parameters $\lambda_k$. These "correlation Bell expressions'' do not depend on the individual values assigned to Alice and Bob's outcomes, but rather on how these outcomes relate to each other. These are the most common Bell inequalities and are written in general as
\begin{equation}
\label{corr}
\mathcal{B}^{\doteqdot} \defeq \sum_{x=1}^{m_A} \sum_{y=1}^{m_B} \alpha_{x,y} \mathbb{E}_{x,y} \leq \mathit{R}_L(\mathbb{B}^{\doteqdot}),
\end{equation}
where $\alpha_{x,y}$ are arbitrary reals.

In the scenario of binary outputs ($d = 2$), the correlation inequalities are also known as XOR games. XOR games (for two parties) consider the scenario when the predicate $V(a,b|x,y) \in \{0, 1\}$ only depends upon the xor of the two parties' outcomes $V(a \oplus b|x,y)$. The fact that these are equivalent to correlation inequalities for $d=2$ is simply seen by noting that for $a,b,k \in \{0,1\}$, we have $P(a \oplus b = k |x, y) = \frac{1}{2} \left(1 + (-1)^k \mathbb{E}_{xy} \right)$ where the correlators $\mathbb{E}_{xy}$ are defined using $\lambda_k = (-1)^k$. XOR games are a widely studied and the most well-understood class of Bell inequalities due to the fact that the maximum quantum values of these inequalities are directly calculable by a semi-definite program \cite{Tsirelson, Cleve, Wehner}. Our first main result is the following.


\begin{prop}
Consider the general XOR Bell expression $\mathcal{B}^{\oplus}$
\begin{eqnarray}
&&\mathcal{B}_{AB}^{\oplus} \defeq \mathbb{B}_{AB}^{\oplus}.\{P(a,b|x,y)\} = \nonumber \\
&&\sum_{x=1}^{m_A} \sum_{y=1}^{m_B} \mu(x, y) \sum_{a, b = 0, 1} V(a \oplus b| x, y) P(a, b | x, y) \leq \mathit{R}_L(\mathbb{B}^{\oplus}) \nonumber 
\end{eqnarray}
with corresponding number of contradictions $\mathit{C}_{\mathbb{B}^{\oplus}}$. For any $m_A, m_B, \mu(x,y)$ and any no-signaling box $\{P(a, b^1, \dots, b^{\mathit{C}_{\mathbb{B}^{\doteqdot}}+1} | x, y^1, \dots, y^{\mathit{C}_{\mathbb{B}^{\doteqdot}} + 1}) \}$ with $x \in [m_A], y^i \in [m_B]$ and $a, b^i \in \{0,1\} \; \; \forall i$, we have that $\mathbb{B}^{\oplus}$ satisfies Eq. (\ref{mono-NS2}), i.e., 
%
%
\begin{equation}
\sum_{i=1}^{\mathit{C}_{\mathbb{B}^{\oplus}} +1} \mathcal{B}^{\oplus}_{AB^{i}} \leq (\mathit{C}_{\mathbb{B}^{\oplus}}+1) \mathit{R}_L(\mathbb{B}^{\oplus}). 
\end{equation}

\end{prop}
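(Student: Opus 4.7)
The plan is to refine the symmetric-extension argument of the preceding theorem (TDS03/PB09) by combining it with the contradiction-number definition, the key observation being that the $m_B - \mathit{C}_{\mathbb{B}^\oplus}$ non-contradictory settings of Bob cannot be exploited by no-signaling beyond what a local strategy already achieves, so that only enough Bobs to ``cover'' the $\mathit{C}_{\mathbb{B}^\oplus}$ contradictory settings are needed, plus one extra to enable averaging.

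I would first split $\mathbb{B}^\oplus = \mathbb{B}^\oplus_S + \mathbb{B}^\oplus_T$, where $S$ is a minimum contradictory set of Bob's settings (WLOG $S = \{1, \ldots, \mathit{C}_{\mathbb{B}^\oplus}\}$) and $T$ its complement. By the definition of $\mathit{C}_{\mathbb{B}^\oplus}$ one has $\mathbb{B}^\oplus_T\cdot P \le \mathit{R}_L(\mathbb{B}^\oplus_T) = \mathit{R}_{NS}(\mathbb{B}^\oplus_T)$ for every no-signaling two-party box $P$, and in particular for each Alice--Bob$^i$ marginal $P^{(i)}$ of the multi-Bob box.

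For each injection $\pi: S \hookrightarrow [\mathit{C}_{\mathbb{B}^\oplus}+1]$ I would construct a local hidden-variable box $L_\pi$ on all $m_B$ settings as follows. On $S$, fix the assigned Bobs' inputs as $y^{\pi(y')} = y'$ and read off the joint distribution of Alice and these Bobs under the no-signaling constraints exactly as in the TDS/PB construction, so that the $\pi(y')$-th marginal reproduces $P^{(\pi(y'))}(a,b|x,y')$. On $T$, append a deterministic response $g^{*}(y)$ obtained from a classical strategy that, by the contradiction property, saturates $\mathit{R}_L(\mathbb{B}^\oplus_T)$ against Alice's induced marginal. Averaging $\mathbb{B}^\oplus\cdot L_\pi$ over $\pi$ so that each Bob plays each contradictory role equally often turns the $S$-component of the average into $\mathbb{B}^\oplus_S\cdot\bar P$ with $\bar P = (\mathit{C}_{\mathbb{B}^\oplus}+1)^{-1}\sum_i P^{(i)}$, while the $T$-component becomes the value of the classical completion. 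Since each $L_\pi$ is local, the average is $\le \mathit{R}_L(\mathbb{B}^\oplus)$, and combined with $\mathbb{B}^\oplus_T\cdot\bar P \le \mathit{R}_L(\mathbb{B}^\oplus_T)$ a rearrangement delivers $\mathbb{B}^\oplus\cdot\bar P \le \mathit{R}_L(\mathbb{B}^\oplus)$, which is the desired $\sum_i \mathcal{B}^\oplus_{AB^i} \le (\mathit{C}_{\mathbb{B}^\oplus}+1)\mathit{R}_L(\mathbb{B}^\oplus)$.

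The main obstacle is the coherent stitching of the $S$ and $T$ parts of $L_\pi$: one must show that the classical response $g^*$ on $T$ can be chosen consistently with the hidden-variable strategy inherited on $S$ so that the combined $L_\pi$ is genuinely local \emph{and} its $T$-component's Bell value dominates $\mathbb{B}^\oplus_T\cdot\bar P$. This is where the binary-output XOR structure is essential: the dependence of the predicate only on $a\oplus b$ permits Bob's optimal classical response on $T$ to be determined pointwise from Alice's (possibly biased) marginal, and the contradiction property then forces the resulting classical value to meet or exceed any no-signaling value on $T$. The argument does not obviously extend to larger output alphabets or non-XOR predicates, consistent with the restricted scope the authors subsequently claim for unique games.
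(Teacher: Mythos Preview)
Your high–level plan—split off the $\mathit{C}_{\mathbb{B}^\oplus}$ contradictory settings $S$, cover them with the $C+1$ Bobs via a TDS/PB hidden–variable construction, and complete the non–contradictory block $T$ classically—is close in spirit to the paper's argument, but the ``stitching'' step you flag as the main obstacle is a genuine gap, not a technicality.

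Concretely, your chain of inequalities needs the $T$–component of the averaged $L_\pi$ to dominate $\mathbb{B}^\oplus_T\cdot\bar P$. This fails already for CHSH ($m_A=m_B=2$, $C=1$, $S=\{1\}$, $T=\{0\}$). Take the tripartite box where $A$–$B^1$ share a PR box and $B^2$ is independent uniform. For $\pi\colon 1\mapsto 2$ the hidden variable $\lambda=b^2$ carries \emph{no} information about Alice, so her conditional strategy is uniform and any $g^*(0,\lambda)$ gives $T$–value $1$; for $\pi\colon 1\mapsto 1$ Alice's output is $\lambda\oplus x$, which depends on $x$, and again the best $g^*$ gives $T$–value $1$. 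The averaged $T$–value is $1$, whereas $\mathbb{B}^\oplus_T\cdot\bar P=\tfrac12(2+1)=\tfrac32$. Your ``rearrangement'' therefore cannot close: from $\tfrac32+1\le 3$ you cannot deduce $\tfrac32+\tfrac32\le 3$. The contradiction property only gives the \emph{unconstrained} equality $R_L(\mathbb{B}^\oplus_T)=R_{NS}(\mathbb{B}^\oplus_T)$; what you actually need is the constrained statement that the $T$–maximum under a \emph{fixed Alice marginal} is attained locally, and moreover by a local box whose Alice decomposition coincides with the one inherited from $S$. Neither follows from the XOR structure alone.

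The paper avoids this by reversing the roles: it rewrites $\sum_i\mathcal{B}^\oplus_{AB^i}=\sum_i\underline{\mathcal{B}}^\oplus_i$ where in $\underline{\mathcal{B}}^\oplus_i$ the \emph{non–contradictory} settings sit on Bob $i$ (retaining a full two–party NS box $\underline{Q}^i(a,b^i|x,y^i)$) and the $C$ contradictory settings are farmed out to the other Bobs, each of whom has a single input and is therefore WLOG deterministic. The crucial lemma then shows that $\underline{Q}^i$ can be replaced by a \emph{local} box with the same Alice marginal without decreasing $\underline{\mathcal{B}}^{nc,\oplus}_i$; since the contradictory part depends on $\underline{Q}^i$ only through that marginal, the whole optimizer becomes local. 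The proof of this lemma is where the $d=2$ structure really enters: it uses the Jones–Masanes characterization of binary–output NS vertices (marginals in $\{0,\tfrac12,1\}$) together with the bit–flip symmetry $a\mapsto a\oplus1,\,b\mapsto b\oplus1$ of XOR predicates to replace each non–local vertex in a decomposition of $\underline{Q}^i$ by a local box with identical marginals and no smaller value. Your outline contains no analogue of this step.
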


The interest in the above statement is due to the fact that two-party XOR games still provide the Bell inequality of choice in many non-locality applications. The paradigmatic example of such an XOR game for which Eq.(\ref{mono-NS2}) significantly outperforms previously known monogamy relations is the chained Bell inequality of Braunstein and Caves \cite{BC}. These inequalities have been applied in randomness amplification and key distribution protocols secure against no-signaling adversaries \cite{Renner, Kent, Remig}. The chained Bell expression $\mathcal{B}_{AB}^{ch, N}$ involves two parties with $m_A = m_B = N$ inputs and $d = 2$ outcomes, it is written as 
\begin{eqnarray} 
\sum_{a,b = 0,1} \Big[ &&\sum_{\substack{x, y =1 \\ x = y \vee x = y +1}}^{N} \mu(x,y) P(a \oplus b = 0 |x, y) +  \nonumber \\ && \mu(1,N) P(a \oplus b = 1 |1, N) \Big]
\end{eqnarray}
where $\mathit{R}_L({\mathbb{B}^{ch, N}}) = 1 - \min_{x,y}{\mu(x,y)}$ is the maximum achievable value by a local box.

\begin{example}
For any no-signaling box $\{P(a,b,c|x,y,z)\}$ with $a, b, c \in [d]$ and $x, y, z \in [N]$, the chain Bell inequality satisfies the strong monogamy relation
\begin{equation}
\mathcal{B}_{AB}^{ch,N} + \mathcal{B}_{AC}^{ch, N} \leq 2\mathit{R}_L({\mathbb{B}^{ch, N}}) 
 \end{equation}
\end{example}
The fact that the contradiction number for chain inequalities is one is due to the following observation. 
If Bob fails to measure even one single setting out of the $N$, perhaps due to an imperfect random number generator choosing his inputs \cite{Renner}, all the constraints in the remaining part of the expression can be satisfied by a deterministic box. Explicitly, if Bob fails to measure setting $k \in [N]$ in an N-chain, a classical winning strategy is for Alice to output $a = 0$ for $x \in [k]$ and $a = 1$ for $k < x \leq N$ while Bob outputs $b=0$ for $y \in [k-1]$ and $b=1$ for $k < y \leq N$. In this case therefore, $C_{\mathbb{B}^{ch, N}} +1 = 2$ is much smaller than $m_B = N$, so that Eq.(\ref{mono-NS2}) provides a significant strengthening over Eq.(\ref{gen-mono}).
Interestingly, the monogamy relation also holds when one considers the same family of XOR games, i.e. where the predicate is defined by the same condition but with the inputs being specified over different domain sizes $N$ and $M$, an observation we pursue elsewhere.

The above monogamy relation is the strongest possible in that it implies that any violation of the chain Bell expression by Alice and Bob precludes its violation by Alice and Charlie. In a cryptographic application \cite{Renner, Kent}, this suggests that if Alice and Bob are able to test for strong correlations leading to a violation of the chain inequality, then the correlations that their systems share with an eavesdropper are necessarily weaker. A similar statement also holds for the higher-dimensional generalization of the chain inequalities considered in \cite{Barrett2} which also analogously have contradiction number one. Interestingly though, the very fact that the contradiction number is small provides an attack in this task to the eavesdropper who may for instance tamper with the random number generator in the Bell test so that Bob does not measure a single setting and a local model exists. 

%

{\it Unique games.}
A generalization of XOR games to more outputs is a family of Bell inequalities called unique games \cite{Kempe}, which also involve only certain correlations between Alice and Bob's outputs with no constraint on the marginals. A two-party game is called $unique$ if for every input pair $x, y$ with $x \in [m_A]$ and $y \in [m_B]$ and for every output $a \in [d]$ of Alice, Bob is required to produce a unique output $b \in [d]$ defined by some permutation of Alice's output which can be different for different input pairs, i.e. $V(a, b | x, y) = 1$ if and only if $b = \sigma_{x,y}(a)$ and $0$ otherwise. 
Unique games are an important class of Bell expressions, finding application in the field of computational complexity in the approximation of computationally hard problems. It is easy to see that the maximum value of every unique game with fixed $\{\sigma_{x,y}\}$ can be achieved by a no-signaling box $\{P(a,b|x,y)\}$ with non-zero entries $P(a, \sigma_{x,y}(a) | x,y) = \frac{1}{d} \; \; \forall a \in [d]$. We now investigate the monogamy relations for this important category of Bell expressions. 
In order to do this, we introduce a strengthened version of the concept of contradiction number.
\begin{mydef}[Strong contradiction number]
\label{def-con2}
For any Bell inequality $\mathcal{B}_{AB} \leq \mathit{R}_L \leq \mathit{R}_{NS}$,
denote by $S^{(s)}(\mathbb{B})$ the set of settings of party $B$ of minimum cardinality $\mathit{C}^{(s)}_{\mathbb{B}} \defeq \vert S^{(s)}({\mathbb{B}}) \vert$ whose removal leads to the optimum no-signaling value being achieved by local deterministic boxes i.e. $\mathcal{\underline{B}}^{nc}_{AB} \leq \mathit{\underline{R}}^{nc}_L(\mathbb{\underline{B}}^{nc}) = \mathit{\underline{R}}^{nc}_{NS}(\mathbb{\underline{B}}^{nc})$ where 
\begin{eqnarray}
\label{sat-exp3}
&&\mathcal{\underline{B}}^{nc}_{AB}\defeq \sum_{x=1}^{m_A} \sum_{y=1}^{m_B - \mathit{C}^{(s)}_{\mathbb{B}}} \mu(x, y) \sum_{a, b = 1}^{d} V(a, b| x, y) P(a, b | x, y), \nonumber \\
\end{eqnarray}
and moreover there exists for every $\{a, x\}$ a local deterministic box achieving value $\mathit{\underline{R}}^{nc}_L(\mathbb{\underline{B}}^{nc})$ that deterministically outputs $a$ for input $x$. We then call $\mathit{C}^{(s)}_{\mathbb{B}}$ as the \textsl{strong contradiction number} for the Bell inequality.
\end{mydef}

%

%
%

In the important case of the so-called free unique games, i.e. where $\mu(x,y) = \mu^A(x) \mu^B(y)$ or in the slightly more general scenario when $\mu(x,y) = \beta_{y,y'} \mu(x,y')$ for some parameters $\beta_{y,y'}$ independent of $x$, the strong monogamy relation does hold. For more general unique games, the proofs presented here do not apply and it is unclear whether the strengthened monogamy relation holds, however it does for the specific case when $m_A = 2$, and arbitrary $m_B, d, \mu(x,y)$. 

\begin{prop}
\label{unique-prop}
Any unique game defining an inequality 
\begin{eqnarray}
&&\mathcal{B}_{AB}^{U} \defeq \nonumber \\ &&\sum_{x=1}^{m_A} \sum_{y=1}^{m_B} \sum_{a, b = 1}^{d} \mu(x,y) V(a, \sigma_{x,y}(a)| x, y) P(a, b | x, y) \leq \mathit{R}_L(\mathbb{B}^{U}) \nonumber
\end{eqnarray}
with the restriction that $\mu(x,y) = \beta_{y, y'} \mu(x, y') \; \; \forall x, y, y'$ and with associated strong contradiction number $C^{(s)}_{\mathbb{B}^{U}}$ satisfies Eq.(\ref{mono-NS2}) in any no-signaling theory independently of $m_A, m_B, d$ for arbitrary non-negative parameters $\beta_{y, y'}$ that do not depend on $x$, i.e.
\begin{equation}
\sum_{i=1}^{\mathit{C}^{(s)}_{\mathbb{B}^{U}}+1} \mathcal{B}_{AB^{i}}^{U} \leq (\mathit{C}^{(s)}_{\mathbb{B}^{U}} + 1) \mathit{R}_L(\mathbb{B}^{U}).
\end{equation}
Moreover, every unique game with $m_A = 2$ obeys Eq.(\ref{mono-NS2}) independent of $m_B, d$ and $\mu(x,y)$.
%
%
\end{prop}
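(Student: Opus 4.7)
Set $N \defeq \mathit{C}^{(s)}_{\mathbb{B}^U} + 1$, let $T \defeq [m_B] \setminus S^{(s)}(\mathbb{B}^U)$ so $|T| = m_B - \mathit{C}^{(s)}$, and fix any no-signaling box $P(a, b^1, \ldots, b^N | x, y^1, \ldots, y^N)$ on Alice and $N$ Bobs. For Part~1, my plan opens by noting that the hypothesis $\mu(x, y) = \beta_{y, y'} \mu(x, y')$ with $\beta_{y, y'}$ independent of $x$ factorizes the measure: picking a reference $y_0$ and setting $\mu^A(x) \defeq \mu(x, y_0)$, $\mu^B(y) \defeq \beta_{y, y_0}$ gives $\mu(x, y) = \mu^A(x) \mu^B(y)$ up to normalization. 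I would then split each Bell expression as $\mathcal{B}^U_{AB^i} = \mathcal{B}^{U, T}_{AB^i} + \mathcal{B}^{U, S}_{AB^i}$ according to whether $y^i \in T$ or $y^i \in S^{(s)}$.

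For the $T$-part, each single-Bob marginal $P_i(a, b | x, y) \defeq P(a, b^i = b | x, y^i = y)$ is a valid no-signaling realization of the restricted expression $\mathcal{\underline{B}}^{nc}$, and the defining property $\mathit{\underline{R}}^{nc}_L = \mathit{\underline{R}}^{nc}_{NS}$ yields $\mathcal{B}^{U, T}_{AB^i} \leq \mathit{\underline{R}}^{nc}_L$ per Bob. The core work is to bound the $S^{(s)}$-part consistently with a local strategy. Using the unique-game predicate and conditioning on Alice's output $a$, $\mathcal{B}^{U, S}_{AB^i}$ reduces to $\sum_{x, a, y \in S^{(s)}} \mu^A(x) \mu^B(y) P(a|x) P_i(\sigma_{x, y}(a) | a, x, y)$. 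The strong contradiction number then supplies, for every pair $(a, x)$, a deterministic local box $(f_{a, x}, g_{a, x})$ with $f_{a, x}(x) = a$ saturating $\mathit{\underline{R}}^{nc}_L$ on $T$; I extend $g_{a, x}$ to $S^{(s)}$ by $g_{a, x}(y) \defeq \sigma_{x, y}(a)$, which deterministically wins every $S$-term attached to Alice-input $x$ and keeps the full Bell value of the extended strategy at most $\mathit{R}_L(\mathbb{B}^U)$. Averaging these deterministic boxes against the weights $\mu^A(x) P(a|x)$ and using the $\mu^A \mu^B$ factorization to decouple the $y^i$-weight from the Alice-side averaging, I obtain a convex mixture of local boxes that dominates each $\mathcal{B}^U_{AB^i}$; summing over $i \in [N]$ delivers the monogamy bound $N \mathit{R}_L(\mathbb{B}^U)$.

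For the case $m_A = 2$ I plan a more direct route that avoids the $\mu$ factorization. Since Alice has only two inputs, there are only $d^2$ deterministic Alice strategies $f : \{1, 2\} \to [d]$, and by no-signaling from Alice the Bob-marginal $P(b^1, \ldots, b^N | y^1, \ldots, y^N)$ is already $x$-independent. For each Bob $i$ I introduce the two marginals $P_i(a, b|1, y)$ and $P_i(a, b|2, y)$, relabel each Bob's outputs via $\sigma_{x, y}$ to collapse the two $x$-dependent winning conditions into a joint one over $(b^1, \ldots, b^N)$, and then invoke a finite case analysis on $(f(1), f(2)) \in [d]^2$ combined with a pigeonhole argument over the $N = \mathit{C}^{(s)} + 1$ Bobs to exhibit local deterministic boxes whose weighted sum matches or exceeds $\sum_i \mathcal{B}^U_{AB^i}$. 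Each such box has value at most $\mathit{R}_L(\mathbb{B}^U)$, which closes the bound.

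The main obstacle I anticipate is the averaging step in Part~1 that produces a single convex mixture dominating all $N$ single-Bob Bell expressions simultaneously. The $\beta_{y, y'}$ hypothesis is used sharply here, precisely to decouple the Alice-side summation over $(x, a)$ from the Bob-side $\mu^B(y^i)$ weighting so that the per-Bob bound from the strong contradiction number can be aggregated into $(\mathit{C}^{(s)} + 1) \mathit{R}_L$ without cross-dependence loss. Without this factorization the cross terms between $\mu^A$ and $\mu^B$ do not separate, matching the proposition's explicit caveat that the first bound is claimed only under this restriction on $\mu$ (outside the separate $m_A = 2$ regime treated in Part~2).
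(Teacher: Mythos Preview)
Your Part~1 argument contains a genuine gap at the averaging step. You assert that the convex mixture $L=\sum_{a,x} w_{a,x} L_{a,x}$ of deterministic boxes ``dominates each $\mathcal{B}^U_{AB^i}$'', and then sum over $i$ to get $N\mathit{R}_L$. But $\mathbb{B}^U\cdot L\le \mathit{R}_L$ because $L$ is local, so your claim would force $\mathcal{B}^U_{AB^i}\le \mathit{R}_L$ for \emph{every} $i$ individually---which is false whenever the inequality is violated (e.g.\ for the chained inequality with $N=2$ Bobs, one Alice--Bob pair can reach $\mathit{R}_{NS}>\mathit{R}_L$). Splitting each $\mathcal{B}^U_{AB^i}$ into its own $T$- and $S$-parts and bounding them separately necessarily loses the cross-Bob coupling that makes the monogamy hold; nothing in your outline uses the fact that the $N$ Bobs share the same Alice beyond the common marginal $P(a|x)$, and that marginal alone cannot enforce $\mathcal{B}^U_{AB^i}\le \mathit{R}_L$.

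The idea the paper supplies and you are missing is a \emph{regrouping} of the total sum: rewrite $\sum_i \mathcal{B}^U_{AB^i}=\sum_i \mathcal{\underline{B}}^U_i$, where in $\mathcal{\underline{B}}^U_i$ party $i$ keeps only the non-contradictory settings $T$, while the $C^{(s)}$ contradictory settings are farmed out one each to the remaining $C^{(s)}$ parties (cyclically over $i$). Each regrouped $\mathcal{\underline{B}}^U_i$ is then a Bell expression in which all parties except $i$ have a \emph{single} setting, so the optimal no-signaling box factorizes as deterministic single-output boxes on the $j\ne i$ tensored with some (possibly nonlocal) $\{\underline{Q}^i(a,b^i|x,y^i)\}$. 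The real technical work---and this is where the $\beta_{y,y'}$ hypothesis enters---is a lemma showing that the no-signaling maximum of $\mathcal{\underline{B}}^{U,nc}$ under an \emph{arbitrary fixed Alice marginal} $\{\tilde P^A(a|x)\}$ is already attained by a local box with the same marginal (the paper reduces to the identity-permutation game and uses $\mu(x,y)=\beta_{y,y'}\mu(x,y')$ to argue the optimizer can be taken constant in $y$). This lets one replace $\underline{Q}^i$ by a local box without changing the value of $\mathcal{\underline{B}}^{U,c}_i$ (which depends on $\underline{Q}^i$ only through the Alice marginal), giving $\mathcal{\underline{B}}^U_i\le \mathit{R}_L$ and hence the monogamy. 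Your strong-contradiction boxes $(f_{a,x},g_{a,x})$ are relevant to this lemma, but the lemma itself is not established by averaging them with weights $\mu^A(x)P(a|x)$; one has to match an \emph{arbitrary} prescribed marginal, not just the particular $P(a|x)$ of the given box.

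For Part~2, ``finite case analysis on $(f(1),f(2))\in[d]^2$ combined with a pigeonhole over the $N$ Bobs'' is not a proof sketch one can check; the paper again uses the same regrouping and then proves the fixed-marginal lemma for $m_A=2$ by an explicit inductive construction of coefficients $p^{b_1,\dots,b_k}_{a_1,a_2}$ that simultaneously maximize each single-$y$ identity-game term under the marginal constraints. Your outline does not indicate how the pigeonhole would replace this construction, and without the regrouping it runs into the same obstruction as Part~1.
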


We also observe that every non-trivial unique game (with $\mathit{R}_{NS}(\mathbb{B}^U) > \mathit{R}_L(\mathbb{B}^U)$) is monogamous at least to a slight degree in the sense that it is not possible for Alice-Bob and Alice-Charlie to simultaneously achieve the maximum no-signaling value of the game. 
\begin{obs}
For any unique game $\mathbb{B}^U$ with parameters $m_A, m_B, d$ and $\mathit{R}_{NS}(\mathbb{B}^U) > \mathit{R}_L(\mathbb{B}^U)$ and any no-signaling box $\{P(a,b,c|x,y,z)\}$ with $a,b,c \in [d]$ and $x \in [m_A], y,z \in [m_B]$, we have
\begin{equation}
(\mathbb{B}_{AB}^U + \mathbb{B}_{AC}^U).\{P(a,b,c|x,y,z)\} < 2 \mathit{R}_{NS}(\mathbb{B}^U).
\end{equation}
\end{obs}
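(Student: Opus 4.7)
The plan is to argue by contradiction. Suppose there is a tripartite no-signaling box $\{P(a,b,c|x,y,z)\}$ with $(\mathbb{B}^U_{AB}+\mathbb{B}^U_{AC}).\{P\}=2\mathit{R}_{NS}(\mathbb{B}^U)$. For any unique game, the optimum $\mathit{R}_{NS}(\mathbb{B}^U)=\sum_{x,y}\mu(x,y)$ is saturated on a bipartite no-signaling marginal if and only if $\sum_a P(a,\sigma_{x,y}(a)|x,y)=1$ for every $(x,y)$ with $\mu(x,y)>0$, equivalently $P(a,b|x,y)=P_A(a|x)\,\delta_{b,\sigma_{x,y}(a)}$. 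Imposing this for both the $AB$ and $AC$ marginals pins Bob's and Charlie's outcomes to deterministic functions of Alice's outcome, forcing the full joint distribution to have the form $P(a,b,c|x,y,z)=P_A(a|x)\,\delta_{b,\sigma_{x,y}(a)}\,\delta_{c,\sigma_{x,z}(a)}$.

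Next I would impose the remaining no-signaling condition that $P(b,c|y,z)=\sum_a P(a,b,c|x,y,z)$ be independent of $x$. Using that $\sigma_{x,y}$ is a permutation, the unique $a$ contributing to a given $(b,c)$ is $a=\sigma_{x,y}^{-1}(b)$, and it contributes precisely when $c=\sigma_{x,z}(\sigma_{x,y}^{-1}(b))$. Under the generic assumption $P_A(a|x)>0$ for all $a,x$, independence of $x$ forces the composition $\sigma_{x,z}\circ\sigma_{x,y}^{-1}=\pi_{y,z}$ to depend only on $(y,z)$. Fixing a reference input $y_0$ and setting $\gamma_x\defeq\sigma_{x,y_0}$ and $\beta_z\defeq\pi_{y_0,z}$ yields the factorization $\sigma_{x,z}=\beta_z\,\gamma_x$ for all $x$ and all $z\in[m_B]$.

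With this factorization in hand, I would exhibit a deterministic local strategy saturating the game: Alice and Bob share a uniform hidden variable $\lambda\in[d]$, Alice outputs $\gamma_x^{-1}(\lambda)$ on input $x$ and Bob outputs $\beta_y(\lambda)$ on input $y$. Then $\sigma_{x,y}(a)=\beta_y\gamma_x\gamma_x^{-1}(\lambda)=\beta_y(\lambda)=b$ with probability one, so the local value attains $\sum_{x,y}\mu(x,y)=\mathit{R}_{NS}(\mathbb{B}^U)$, contradicting non-triviality $\mathit{R}_{NS}(\mathbb{B}^U)>\mathit{R}_L(\mathbb{B}^U)$.

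The main obstacle I expect is the support issue: if $P_A(a|x)=0$ for some $(a,x)$, the identity $\sigma_{x,z}=\beta_z\gamma_x$ need only hold on the effective output set $\mathcal{A}_x=\{a:P_A(a|x)>0\}$. To handle this I would restrict attention to these sets, noting that no-signaling forces Bob's marginal $P(b|y)$ (equivalently the pushforward of $P_A(\cdot|x)$ under $\sigma_{x,y}$) to be $x$-independent, and then reading off a shared random variable from this common marginal yields a local strategy that still wins every constraint deterministically and therefore still attains $\mathit{R}_{NS}(\mathbb{B}^U)$, supplying the required contradiction.
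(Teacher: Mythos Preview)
Your argument is correct and takes a genuinely different route from the paper's. The paper proceeds via its decomposition technique: it rewrites $\mathcal{B}^U_{AB}+\mathcal{B}^U_{AC}$ as $\mathcal{\underline{B}}^U_1+\mathcal{\underline{B}}^U_2$, where in $\mathcal{\underline{B}}^U_1$ Charlie measures only the single contradictory setting $y^*$; then, since the optimizer can be taken with deterministic output $c^*$ for Charlie, algebraic saturation forces $Q^A(\sigma_{x,y^*}^{-1}(c^*)\mid x)=1$ for all $x$, which cascades to make Bob's outputs deterministic and hence $\mathit{R}_L=\mathit{R}_{NS}$, a contradiction. Your approach instead works directly with the tripartite box: saturation of both bipartite marginals pins the full joint to $P(a,b,c|x,y,z)=P_A(a|x)\,\delta_{b,\sigma_{x,y}(a)}\,\delta_{c,\sigma_{x,z}(a)}$, and then the \emph{remaining} no-signaling condition (that the $BC$ marginal be independent of $x$) forces the compositions $\sigma_{x,z}\circ\sigma_{x,y}^{-1}$ to be $x$-independent on the relevant support, yielding a factorization $\sigma_{x,y}=\beta_y\gamma_x$ and hence an explicit perfect local strategy.

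The paper's route has the advantage of reusing the same ``spread the contradictions to other parties'' machinery that drives Propositions~1 and~2, so it sits naturally in the paper's framework. Your route is more structural and arguably more transparent for unique games specifically: it isolates an algebraic obstruction (non-factorizability of the permutation family) as the exact reason the tripartite saturation fails, and it does not need to pass through the auxiliary expressions $\mathcal{\underline{B}}^U_i$. One small remark: in step~8 the shared randomness is unnecessary (any fixed $\lambda$ in the support already gives a deterministic winning strategy), and both your argument and the paper's implicitly assume $\mu(x,y)>0$ on the relevant input pairs; your handling of the $P_A(a|x)=0$ support issue via the common Bob marginal is the right idea and can be made fully rigorous along the lines you sketch.
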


%
%


%

{\it Discussion.}
A natural question to investigate is whether the strengthened monogamy relations derived here could possibly hold for all inequalities beyond the correlation expressions considered here. Another interesting question is whether all Bell inequality violations are monogamous at least to a slight degree, i.e. is it always the case that when two parties Alice and Bob violate a Bell inequality to its optimal no-signaling value, Alice and Charlie are unable to also optimally violate the inequality? Interestingly, the answer to both questions turns out to be negative due to the following. 

\begin{claim}
\label{counter}
There exists a two-party Bell expression $\mathbb{B}$ with $m_A = m_B = 3, d = 4$, associated contradiction numbers $C_{\mathbb{B}} = C^{(s)}_{\mathbb{B}} = 1$, maximum no-signaling value greater than the maximum local value ($\mathit{R}_{NS} > \mathit{R}_{L}$) and a three-party no-signaling box $\{P(a,b,c|x,y,z)\}$ with $a, b, c \in \{1, 2, 3, 4\}$ and $x, y, z \in \{1, 2, 3\}$ such that 
\begin{equation}
(\mathbb{B}_{AB} + \mathbb{B}_{AC}).\{P(a,b,c|x,y,z)\} = 2 \mathit{R}_{NS}(\mathbb{B}).
\end{equation}
\end{claim}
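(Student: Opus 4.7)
The proof is by explicit construction. The strategy is to exhibit a specific Bell expression $\mathbb{B}$ on $m_A=m_B=3, d=4$ together with a tripartite no-signaling box witnessing the saturation. The natural idea is to exploit the $d=4$ outcome structure by identifying each party's outcomes with bit pairs $(a_1,a_2), (b_1,b_2), (c_1,c_2) \in \{0,1\}^2$ and designing $\mathbb{B}$ so that its optimal no-signaling value $R_{NS}$ is attainable by two distinct families of boxes: one correlating $(a_1,b_1)$ non-locally while leaving the second bits uniform-independent, and another correlating $(a_2,b_2)$ non-locally while leaving the first bits uniform-independent. The expression itself is built around a chain-like core on one distinguished pair of settings, augmented with a coupling term between the two bit sectors that prevents any ``both-bits-correlated'' strategy from strictly exceeding the individual single-bit optima.

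The tripartite box is then assembled from orthogonal bit-sectors: set $P(a,b,c|x,y,z) = P_1(a_1,b_1|x,y) \cdot P_2(a_2,c_2|x,z) \cdot \tfrac{1}{4}$, where $P_1$ (respectively $P_2$) is a two-party non-local no-signaling box achieving $R_{NS}$ on the first-bit (respectively second-bit) sub-expression, and the factor $\tfrac{1}{4}$ arises from uniform distributions over the free coordinates $b_2$ and $c_1$. The marginal $P(a,b|x,y)$ then carries first-bit correlations inherited from $P_1$ with second bits uniform-independent, and achieves $R_{NS}(\mathbb{B})$ by the design of $\mathbb{B}$; symmetrically, $P(a,c|x,z)$ achieves $R_{NS}$ through second-bit correlations. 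Summing yields $(\mathbb{B}_{AB}+\mathbb{B}_{AC}).P = 2R_{NS}$. No-signaling holds at every level because the joint factorizes into two independent no-signaling factors supported on disjoint sector pairs, together with uniform completions over the remaining coordinates.

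Verification of $R_{NS}>R_L$ and of $C_{\mathbb{B}} = C^{(s)}_{\mathbb{B}} = 1$ reduces to a direct check: one designated Bob setting is the sole source of the no-signaling--classical gap, and removing it leaves an expression classically maximized with Alice's outputs held fixed deterministically for each $x$, which establishes the strong contradiction number. The principal difficulty---and the reason the counterexample genuinely requires $d \geq 4$---is tuning $\mathbb{B}$ so that both single-bit no-signaling strategies \emph{individually} saturate $R_{NS}$ while the joint ``both-bits-correlated'' strategy does not exceed them. This forces a nontrivial coupling between the two bit sectors, beyond the direct sum of two independent single-bit chain inequalities; without it, the ``both bits PR-correlated'' box would dominate and the single-bit extrema whose tripartite gluing we rely on would cease to be optimal. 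Once such a tuning is found, the failure of the monogamy inequality in Eq.~(\ref{mono-NS2}) follows immediately from the evaluation above.
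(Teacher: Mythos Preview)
Your proposal outlines a plausible architecture for a counterexample but never actually exhibits one, and for an existence claim proven by construction that is the entire content of the proof. You describe desiderata for $\mathbb{B}$---that it admit two distinct no-signaling optima supported on ``orthogonal'' bit sectors, that neither be dominated by the jointly-correlated strategy, that exactly one Bob setting carry the classical--no-signaling gap---and you correctly observe that if such a $\mathbb{B}$ exists then the product box $P(a,b,c|x,y,z)=P_1(a_1,b_1|x,y)\,P_2(a_2,c_2|x,z)\cdot\tfrac14$ is no-signaling and saturates $2R_{NS}$. But the phrase ``Once such a tuning is found'' is precisely the missing step: you have not written down coefficients $\mathbb{B}(a,b,x,y)$, have not computed $R_L$ and $R_{NS}$, and have not verified $C_{\mathbb{B}}=C^{(s)}_{\mathbb{B}}=1$. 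The ``principal difficulty'' you flag---balancing the two single-bit optima against the joint strategy while keeping the contradiction number equal to one---is exactly the nontrivial part, and acknowledging it is not the same as resolving it.

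The paper proceeds differently: it takes a specific $3\times 3\times 4$ inequality from the literature (Cabello, Ref.~\cite{Cabello2}), tabulates all $144$ coefficients explicitly, and then checks numerically that $R_L=8$, $R_{NS}=9$, that removing Bob's setting $III$ yields a classically saturable sub-expression (so $C_{\mathbb{B}}=1$) with deterministic winners for every $(a,x)$ (so $C^{(s)}_{\mathbb{B}}=1$). The tripartite box is written down in closed form---$P(a,b,c|x,y,z)=\tfrac18$ when $y=z$, $b=c$, $\mathbb{B}(a,b,x,y)=1$; $\tfrac{1}{16}$ when $y\neq z$ and both indicator conditions hold; zero otherwise---and the no-signaling constraints and marginal values are verified directly. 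Your bit-sector product idea is conceptually cleaner and would give a more transparent reason for the phenomenon, but as it stands it is a heuristic, not a proof; to complete it you would need to actually produce the tuned expression and carry out the same checks the paper does.
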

We present (in the Supplemental Material) the Bell inequality which does not fall within the classes considered in this article and which counteracts the hope that the strengthened monogamy relation for the contradiction number can be applied to arbitrary Bell inequalities beyond those involving correlation expressions. The domain of validity and the tightness of the monogamy relations deserves further investigation.


{\it Conclusions.} In this paper, we have presented a strong monogamy relation that applies to a wide class of Bell inequalities such as XOR games and free unique games. These monogamy relations provide a significant improvement over previously known results and suggest why well-known inequalities such as the Braunstein-Caves chain inequalities are useful in many cryptographic tasks. The methods presented here can also be applied to derive monogamy relations between different Bell inequalities within the same class, an investigation we defer for the future. 
Apart from the quantitative results presented here, the shift in the paradigm from the importance of the number of settings to the number of contradictions leads also to questions regarding the exact physical reasons underpinning these monogamies and their multipartite generalization. One might also consider monogamies in scenarios with multiple Alices and Bobs, in which case in addition to the number of contradictions, the placement and relations between different contradictory settings should also be relevant.

{\em Acknowledgements.} This work is supported by the ERC grant QOLAPS and also forms part of the Foundation for Polish Science TEAM project co-financed by the EU European Regional Development Fund. We thank M. L. Nowakowski, M. Horodecki and K. Horodecki for useful discussions.

{\bf Supplemental Material.}
Here we present the proofs of the propositions in the main text.

\textbf{Proposition 1.}
\textit{The general two-party XOR inequality 
\begin{eqnarray}
&&\mathbb{B}_{AB}^{\oplus}.\{P(a,b|x,y)\} \defeq \nonumber \\
&&\sum_{x=1}^{m_A} \sum_{y=1}^{m_B} \mu(x, y) \sum_{a, b = 0, 1} V(a \oplus b| x, y) P(a, b | x, y) \leq \mathit{R}_L(\mathbb{B}^{\oplus}) \nonumber 
\end{eqnarray}
satisfies in any no-signaling theory the monogamy relation Eq. (\ref{mono-NS2}) for arbitrary $m_A, m_B$ and $\mu(x, y)$, i.e.
\begin{equation}
\label{xor-mono-2}
\sum_{i=1}^{\mathit{C}_{\mathbb{B}^{\oplus}} +1} \mathcal{B}^{\oplus}_{AB^{i}} \leq (\mathit{C}_{\mathbb{B}^{\oplus}}+1) \mathit{R}_L(\mathbb{B}^{\oplus}). 
\end{equation}}
\begin{proof}
Let $\{Q(a, b^1, \dots, b^{\mathit{C}+1} | x, y^1, \dots, y^{\mathit{C}+1})\}$ be a no-signaling box that achieves the maximum value of the expression 
\begin{equation}
\label{corr-sum}
\sum_{i=1}^{\mathit{C}+1} \mathcal{B}^{\oplus}_{AB^{i}} = \sum_{i=1}^{\mathit{C} +1} \mathbb{B}^{\oplus}_{AB^{i}}.\{P(a, b^1, \dots, b^{\mathit{C}+1} | x, y^1, \dots, y^{\mathit{C} + 1}) \}.
\end{equation}
Note that we have dropped the subscript ${\mathbb{B}^{\oplus}}$ on $\mathit{C}$ for ease of notation. 

Now the sum $\sum_i \mathcal{B}^{\oplus}_{AB^{i}}$ can be rewritten as $\sum_i \mathcal{B}^{\oplus}_{AB^{i}} = \sum_i \mathcal{\underline{B}}^{\oplus}_i$ with $\mathcal{\underline{B}}^{\oplus}_i \defeq \mathcal{\underline{B}}^{nc, \oplus}_{i} + \mathcal{\underline{B}}^{c, \oplus}_{i}$. Here 
\begin{eqnarray}
\label{sat-exp2}
&&\mathcal{\underline{B}}^{nc, \oplus}_{i} = \mathbb{\underline{B}}^{nc, \oplus}_{i}.\{P(a,b^i|x,y^i)\} \defeq \nonumber \\ &&\sum_{x=1}^{m_A} \sum_{y^i=1}^{m_B - \mathit{C}} \mu(x, y^i) \sum_{a, b^i = 0, 1} V(a \oplus b^i| x, y^i) P(a, b^i | x, y^i) \nonumber \\
\end{eqnarray} 
is an expression analogous to Eq.(\ref{sat-exp}) in Definition \ref{def-con} whose optimal no-signaling value is achieved by a local deterministic box, and $\mathcal{\underline{B}}^{c, \oplus}_{i}$ is defined by
\begin{eqnarray}
\label{sing-set}
&&\mathcal{\underline{B}}^{c, \oplus}_{i} = \mathbb{\underline{B}}^{c, \oplus}_{i}.\{P(a,b^1, \dots, b^{\mathit{C}+1} | x, y^1, \dots, y^{\mathit{C}+1})\} \defeq \nonumber \\ && \sum_{j=i+1}^{i+\mathit{C}} \sum_{x=1}^{ m_A}  \mu(x, y^j) \sum_{a, b^j = 0, 1} V(a \oplus b^j| x, y^j) P(a, b^j | x, y^j), \nonumber \\
\end{eqnarray}
where the sum over $j$ is over the parties labeled $i+1$ to $i + C$ modulo the number of parties. Consequently in the expression in Eq. (\ref{sing-set}), the $C$ contradictions are spread over the $C$ remaining parties labeled $j$ ($\neq i$) and each of these parties only measures a single setting.

Firstly, let us note that the maximum achievable value by local deterministic boxes for the expression $\mathbb{B}^{\oplus}_{AB^{i}}$ is equal to that for the expression $\mathbb{\underline{B}}^{\oplus}_i$, i.e.
\begin{equation}
\label{local-eq}
\mathit{R}_L(\mathbb{B}^{\oplus}_{AB^{i}} ) = \mathit{\underline{R}}_L(\mathbb{\underline{B}}^{\oplus}_{i}).
\end{equation}
This is because any deterministic box that achieves the local maximum for $\mathbb{B}^{\oplus}_{AB^{i}}$ assigns some specific outcome $b^{j}_{*}$ for each of Bob's inputs in the set of contradictions $S(\mathbb{B}^{\oplus}_{AB})$ which can equivalently be ascribed when these same inputs are measured by different parties $j \neq i$. Similarly, any deterministic strategy for the Bell scenario in $\mathbb{\underline{B}}^{\oplus}_{i}$ can be rewritten as a deterministic strategy for that in $\mathbb{B}^{\oplus}_{AB^{i}}$. 

Now, consider any no-signaling box $\{\underline{Q}(a,b^1, \dots, b^{C+1} | x, y^1, \dots, y^{C+1})\}$ that achieves the optimal no-signaling value for $\mathcal{\underline{B}}^{\oplus}_i$. This box may without loss of generality be written as
\begin{eqnarray}
\label{opt-form}
&&\{\underline{Q}(a,b^1, \dots, b^{C+1} | x, y^1, \dots, y^{C+1})\} = \nonumber \\ &&\otimes_{\substack{j=1 \\ j\neq i}}^{C+1} \{\underline{Q}^j(b^j|y^j)\} \otimes \{\underline{Q}^{i}(a,b^i|x, y^i)\}.
\end{eqnarray}
where each of the boxes $\{\underline{Q}^j(b^j|y^j)\}$ with $j \neq i$ deterministically return a single output $b^{j}_{*}$ for input $y^j$ and $\{\underline{Q}^{i}(a,b^i|x,y^i)\}$ is any no-signaling box that could in principle be non-local. The reason for this structure is that the optimization of the linear Bell function over the no-signaling convex polytope attains the optimum at some extreme point so we do not need to consider probabilistic mixtures of the boxes $\{\underline{Q}^j(b^j|y^j)\}$. Our task is to show that $\{\underline{Q}^{i}(a,b^i|x,y^i)\}$ can be replaced by a local box so that the whole box $\underline{Q}$ takes on local structure and the optimal no-signaling value of $\mathcal{\underline{B}}^{\oplus}_i$ becomes equal to its optimal local value. 

To this end, observe that the value $\mathcal{\underline{B}}^{nc, \oplus}_{i}$ only depends on $\{\underline{Q}^{i}(a,b^i|x,y^i)\}$ while the value $\mathcal{\underline{B}}^{c, \oplus}_{i}$ depends on the rest of the boxes labeled $j$ as well as the marginal distributions $\{\underline{Q}^A(a|x)\}$ with entries $\underline{Q}^A(a|x) = \sum_{b^i} \underline{Q}^{i}(a,b^i|x,y^i)$. We now prove the following useful statement.

\begin{lemma}
\label{local-marg-xor}
The maximum value of $\mathcal{\underline{B}}^{nc, \oplus}$ over all no-signaling boxes under the constraint of an arbitrary fixed marginal distribution $\{\tilde{P}^{A}(a|x)\}$ is achieved by a local box $\sum_k p_k \{P^{A}_{k}(a|x)\} \otimes \{P^{B}_{k}(b|y)\}$ with $$\sum_k p_k P^{A}_k(a|x) = \tilde{P}^{A}(a|x) \; \; \; \forall a, x.$$
\end{lemma}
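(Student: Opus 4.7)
The plan is to exploit two facts: (i) for XOR games the Bell value depends on the box only through the correlators $E_{xy} = \sum_{a,b}(-1)^{a \oplus b} P(a,b|x,y)$, and (ii) the contradiction-number hypothesis forces the local and no-signaling optima of $\mathcal{\underline{B}}^{nc,\oplus}$ to coincide when no marginal constraint is imposed. The goal is to propagate (ii) to the marginal-constrained setting.

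First I would rewrite the expression in correlator form:
\begin{equation}
\mathcal{\underline{B}}^{nc,\oplus}(P) = c_0 + \sum_{x=1}^{m_A}\sum_{y=1}^{m_B - \mathit{C}} c_{xy} E_{xy}(P),
\end{equation}
via the identity $\sum_{a,b} V(a \oplus b|x,y) P(a,b|x,y) = \tfrac{1}{2}[V(0|x,y)+V(1|x,y)] + \tfrac{1}{2}[V(0|x,y)-V(1|x,y)] E_{xy}(P)$, with constants $c_0, c_{xy}$ depending only on $V$ and $\mu$. Since Alice's and Bob's marginals enter the cost only through the correlators, the lemma reduces to comparing the sets of correlator vectors $(E_{xy})$ achievable by NS and by local boxes with Alice marginal pinned to $\tilde{P}^A$.

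Next, let $Q$ be an NS box realizing the constrained maximum with $Q^A = \tilde{P}^A$ and let $(E^Q_{xy})$ be its correlator vector. I would build a local box $P'$ with the same marginal and same correlators as follows. By the unconstrained hypothesis, the correlator vector $(E^Q_{xy})$ is achievable by some convex combination of local deterministic boxes $(f_k, g_k)$ achieving the no-signaling optimum; possibly these realize Alice marginals different from $\tilde{P}^A$. I would then exploit the $\mathbb{Z}_2$ symmetry $(f,g) \mapsto (1 \oplus f, 1 \oplus g)$ of XOR games, which preserves every $E_{xy}$ while flipping Alice's deterministic output, and combine it with additional mixing among optimal-local strategies to tune the Alice marginal to $\tilde{P}^A$ without altering any $E_{xy}$.

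The principal obstacle is the last step: showing that the achievable set of (Alice marginal, correlator vector) pairs for local boxes equals that for NS boxes on the sub-expression $\mathcal{\underline{B}}^{nc,\oplus}$. This is where the no-contradiction structure must be used crucially, since a single flipping symmetry is not in general enough to match an arbitrary $\tilde{P}^A$. My approach would be either (a) an explicit convex decomposition of $Q$ into optimal local deterministic boxes and local-noise components, chosen so that summing the weighted Alice strategies reproduces $\tilde{P}^A$; or (b) an LP-duality argument in which the unconstrained dual certificate provided by the contradiction-number hypothesis is paired with the linear marginal constraints so that the resulting primal optimum is still attained on the face spanned by local deterministic boxes. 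Either route concludes by producing the required mixture $\sum_k p_k P^A_k(a|x) \otimes P^B_k(b|y)$ with $\sum_k p_k P^A_k(a|x) = \tilde{P}^A(a|x)$.
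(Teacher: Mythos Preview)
Your reduction to correlators and your identification of the $\mathbb{Z}_2$ flip $(a,b)\mapsto(a\oplus 1,b\oplus 1)$ are correct and are also used in the paper. However, what you label ``the principal obstacle'' is in fact the entire content of the lemma, and neither of your sketched routes (a) or (b) supplies the missing idea. In particular, the sentence ``By the unconstrained hypothesis, the correlator vector $(E^Q_{xy})$ is achievable by some convex combination of local deterministic boxes achieving the no-signaling optimum'' is not justified: the no-contradiction hypothesis only tells you that the unconstrained optimal \emph{value} coincides for local and NS boxes, not that an arbitrary NS-achievable correlator vector --- let alone the one realised by the constrained optimiser $Q$ --- lies in the convex hull of \emph{optimal} local deterministic strategies. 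You are also aiming for something stronger than needed: you do not have to reproduce $Q$'s correlators, only to produce a local box with Alice marginal $\tilde P^A$ whose Bell value is at least that of $Q$.

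The paper closes the gap with a structural input you do not invoke: the Jones--Masanes characterisation of the extremal points of the binary-output no-signaling polytope, which says that every vertex has marginals $P(a|x),P(b|y)\in\{0,\tfrac12,1\}$, with nonlocal vertices carrying the value $\tfrac12$ on the inputs involved in the nonlocality. The argument then runs: write the constrained optimiser $P^*$ as a convex combination of vertices; leave the local vertices alone; for each nonlocal vertex, replace its ``$\tfrac12$-marginal part'' by the uniform mixture $\tfrac12(P_D^1+P_D^2)$ of the optimal deterministic box and its global bit-flip. This replacement is local, reproduces exactly the same Alice marginal (uniform where it was uniform, deterministic where it was deterministic), and can only raise the Bell value because $P_D^1,P_D^2$ already achieve the unconstrained maximum. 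Summing over the decomposition gives a local box with marginal $\tilde P^A$ and value at least $\mathcal{\underline{B}}^{nc,\oplus}(P^*)$. Your $\mathbb{Z}_2$ symmetry is precisely what makes the mixture $\tfrac12(P_D^1+P_D^2)$ have uniform marginals while remaining optimal; what you are missing is the vertex-structure result that guarantees this uniform-marginal local box can be slotted in without disturbing $\tilde P^A$.
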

\begin{proof}
By definition without a restriction on the marginals, $\mathcal{\underline{B}}^{nc, \oplus}$ is maximized by a local deterministic box $\{P_{D}^{1}(a,b|x,y)\}$. A local relabeling of the outputs by Alice and Bob using shared randomness of the form $a \rightarrow a \oplus 1, b \rightarrow b \oplus 1$ for all $x, y$ leads to a local box $\{P_{D}^{2}(a,b|x,y)\}$ which also achieves the maximum no-signaling value (since this value only depends on the xor of the outcomes which is unchanged under this operation), i.e.
\begin{eqnarray}
P_{D}^{2}(a \oplus 1, b \oplus 1| x, y) \defeq P_{D}^{1}(a, b | x, y) \; \; \forall a, b, x, y. \nonumber
\end{eqnarray}
This implies that the local box  
\begin{equation}
\{P^{\texttt{U}}_{L}(a, b | x, y)\} \defeq  \frac{1}{2} \sum_{k=1,2} \{P_{D}^{k}(a, b | x, y)\}
\end{equation}
with uniform marginals 
\begin{eqnarray}
P^{\texttt{U}, A}_{L}(a | x) &=& \sum_{b} P^{\texttt{U}}_{L}(a, b | x, y) \nonumber \\
&=& \frac{1}{2} \; \; \; \forall a, x
\end{eqnarray}
also achieves the maximum no-signaling value of $\mathcal{\underline{B}}^{nc, \oplus}$ when there is no constraint of a fixed marginal distribution.

Now, let $\{P^*(a,b|x,y)\}$ denote the box that achieves the maximum no-signaling value of $\mathcal{\underline{B}}^{nc, \oplus}$ under the restriction of a fixed marginal distribution 
\begin{equation}
\sum_{b} P^*(a,b|x,y) = \tilde{P}^A(a|x) \;\; \; \forall a, x.
\end{equation}
Let us express $\{P^*(a,b|x,y)\}$ as a convex combination of the vertices of the no-signaling polytope (including the local and non-local vertices).
\begin{eqnarray}
\label{conv-decomp}
&&\{P^*(a,b|x,y)\} = \sum_v p^*_v \{P_v(a,b|x,y)\} \nonumber \\
&&= \sum_{v_L} p^*_{v_L} \{P_{v_L}(a,b|x,y)\} + \sum_{v_{NL}} p^*_{v_{NL}} \{P_{v_{NL}}(a,b|x,y)\}. \nonumber \\
\end{eqnarray}
where the sum over $v_L$ is over the local vertices and the sum over $v_{NL}$ is over the non-local ones.
The vertices of the no-signaling polytope for the situation $d=2$ and arbitrary $m_A, m_B$ have been characterized in \cite{Jones}. It was shown that every vertex of this polytope has entries $P(a|x), P(b|y) \in \{0, \frac{1}{2}, 1\} \; \; \forall a, b, x, y$ with the non-local vertices having entries $0, \frac{1}{2}, 1$ while the local vertices only have the deterministic entries $0, 1$. Due to this property and following the construction above we now see that we can replace every non-local vertex $\{P_{v_{NL}}(a,b|x,y)\}$ in the decomposition by a local box $\{P^{\texttt{U}, v_{NL}}_{L}(a, b | x, y)\}$. This is done by replacing the part of the non-local vertex with entries $1/2$ by a uniform combination of the same part of the deterministic boxes $\{P_{D}^{1}(a,b|x,y)\}$ and $\{P_{D}^{2}(a,b|x,y)\}$. Moreover since these boxes by definition maximize the Bell expression, such an operation can only increase the Bell value. In other words, this construction leads to the local box
\begin{eqnarray}
\label{conv-decomp}
&&\{P^{*}_{L}(a,b|x,y)\} \defeq \nonumber \\ &&\sum_{v_L} p^*_{v_L} \{P_{v_L}(a,b|x,y)\} + \sum_{v_{NL}} p^*_{v_{NL}} \{P^{\texttt{U}, v_{NL}}_{L}(a, b | x, y)\} \nonumber \\
\end{eqnarray}
which achieves the same marginal distribution 
\begin{equation}
\sum_{b} P^{*}_{L}(a,b|x,y) = \tilde{P}^A(a|x) \;\; \; \forall a, x
\end{equation}
and 
\begin{equation}
\mathbb{\underline{B}}^{nc, \oplus}.\{P^{*}_{L}(a,b|x,y)\} \geq \mathbb{\underline{B}}^{nc, \oplus}.\{P^{*}(a,b|x,y)\}.
\end{equation}
This proves the statement in Lemma \ref{local-marg-xor}.
\end{proof}
Returning to the box $\{\underline{Q}(a,b^1, \dots, b^{C+1} | x, y^1, \dots, y^{C+1})\}$ that achieves the optimal no-signaling value for $\mathcal{\underline{B}}^{\oplus}_i$ and its general form in Eq.(\ref{opt-form}), we now see that one can use Lemma\ref{local-marg-xor} to construct a local box by replacing $\{\underline{Q}^i(a,b^i|x, y^i)\}$ by a local box $\{\underline{Q}^i_{L}(a,b^i|x, y^i)\}$ with the same marginal distribution which achieves a not smaller value of $\mathcal{\underline{B}}^{nc, \oplus}$, i.e.
\begin{eqnarray}
\label{opt-form-2}
&&\{\underline{Q}_L(a,b^1, \dots, b^{C+1} | x, y^1, \dots, y^{C+1})\} = \nonumber \\ &&\otimes_{\substack{j=1 \\ j\neq i}}^{C+1} \{\underline{Q}^j(b^j|y^j)\} \otimes \{\underline{Q}^{i}_L(a,b^i|x, y^i)\}.
\end{eqnarray}
Since $\mathcal{\underline{B}}^{c, \oplus}_{i}$ only depends on $\{\underline{Q}^i(a,b^i|x, y^i)\}$ via the marginal $\{\underline{Q}^A(a|x)\}$, we see that the maximum no-signaling value of $\mathcal{\underline{B}}^{\oplus}_{i}$ is achieved by the local box $\{\underline{Q}_L(a,b^1, \dots, b^{C+1} | x, y^1, \dots, y^{C+1})\}$. This value is by definition $\mathit{\underline{R}}_L(\mathbb{\underline{B}}^{\oplus})$ which by Eq.(\ref{local-eq}) is the same as $\mathit{R}_L(\mathbb{B}^{\oplus}_{AB} )$. Therefore, 
\begin{equation}
\sum_{i=1}^{\mathit{C}+1} \mathcal{\underline{B}}^{\oplus}_i \leq (\mathit{C}+1) \mathit{\underline{R}}_L(\mathbb{\underline{B}}^{\oplus})
\end{equation}
which is equivalent to Eq.(\ref{xor-mono-2}).
\end{proof}

\textbf{Proposition 2.}
\textit{Any unique game defining an inequality 
\begin{eqnarray}
\label{unique-game}
&&\mathcal{B}_{AB}^{U} \defeq \nonumber \\ &&\sum_{x=1}^{m_A} \sum_{y=1}^{m_B} \mu(x,y) \sum_{a, b = 1}^{d} V(a, \sigma_{x,y}(a)| x, y) P(a, b | x, y) \leq \mathit{R}_L \nonumber \\
\end{eqnarray}
with the restriction that $\mu(x,y) = \beta_{y, y'} \mu(x, y') \; \; \forall x, y, y'$ and with associated strong contradiction number $C^{(s)}_{\mathbb{B}^{U}}$ satisfies Eq.(\ref{mono-NS2}) in any no-signaling theory independently of $m_A, m_B, d$ for arbitrary non-negative parameters $\beta_{y, y'}$ that do not depend on $x$, i.e.
\begin{equation}
\sum_{i=1}^{\mathit{C}^{(s)}_{\mathbb{B}^{U}}+1} \mathcal{B}_{AB^{i}}^{U} \leq (\mathit{C}^{(s)}_{\mathbb{B}^{U}} + 1) \mathit{R}_L(\mathbb{B}^{U}).
\end{equation}
Moreover, every unique game with $m_A = 2$ obeys Eq.(\ref{mono-NS2}) independent of $m_B, d$ and $\mu(x,y)$.}
%
%

\begin{proof}
We first prove the part of the Proposition concerning the situation $\mu(x,y) = \beta_{y,y'} \mu(x,y') \; \; \forall x, y, y'$ and arbitrary $m_A, m_B, d$.
Consider the unique game Eq.(\ref{unique-game}) played between Alice and $\mathit{C}^{(s)}+1$ Bobs (where we have dropped the subscript $\mathbb{B}^U$ for simplicity) and let $\{Q(a, b^1, \dots, b^{\mathit{C}^{(s)}+1} | x, y^1, \dots, y^{\mathit{C}^{(s)}+1})\}$ be the box that maximizes the resulting expression $\sum_{i=1}^{C^{(s)}+1} \mathcal{B}^{U}_{AB^{i}}$
over all no-signaling boxes.  

Now, as in the previous proof the sum $\sum_i \mathcal{B}^{U}_{AB^{i}}$ can be rewritten as $\sum_i \mathcal{\underline{B}}^{U}_i$ with $\mathcal{\underline{B}}^{U}_i = \mathcal{\underline{B}}^{U, nc}_i + \mathcal{\underline{B}}^{U, c}_i$. Here
\begin{eqnarray}
\label{unique-no-con}
&&\mathbb{\underline{B}}^{U, nc}.\{P(a,b|x,y)\} \defeq \nonumber \\ &&\sum_{x=1}^{m_A} \sum_{y=1}^{m_B - C^{(s)}} \mu(x,y) \sum_{a, b = 1}^{d} V(a, \sigma_{x,y}(a)| x, y) P(a, b | x, y) \nonumber 
\end{eqnarray}
is the expression analogous to Eq. (\ref{sat-exp3}) in Definition \ref{def-con2} with no contradictions saturated by a local deterministic box (for every outcome $a$ for every setting $x$ of Alice). And 
\begin{eqnarray}
\label{unique-con}
&& \mathbb{\underline{B}}^{U, c}_i.\{P(a,b^1, \dots, b^{C+1}|x,y^1, \dots, y^{C+1})\} \defeq \nonumber \\ && \sum_{j=i+1}^{i+C} \sum_{x=1}^{ m_A} \mu(x,y^j) \sum_{a, b = 1}^{d} V(a, \sigma_{x,y^j}(a)| x, y^j) P(a, b^j | x, y^j) \nonumber \\
\end{eqnarray}
is the expression with the contradictions in the Bell inequality spread over $C$ other parties each of which measures a single setting (note that the sum over $j$ is taken modulo $C+1$). It is also clear following the argument in the previous proof that the maximum local value of $\mathcal{B}^{U}_{AB}$ is equal to the maximum local value of $\mathcal{\underline{B}}^{U}$, i.e. 
\begin{equation}
\label{local-eq2}
\mathit{R}_L(\mathbb{B}^{U}_{AB}) = \mathit{R}_L(\mathbb{\underline{B}}^{U}).
\end{equation}
%
We now make the following useful observation.
%

\begin{obs}
\label{obs-local-marg}
The maximum value of $\mathcal{\underline{B}}^{U, nc}$ over all no-signaling boxes under the constraint of an arbitrary fixed marginal distribution $\{\tilde{P}^{A}(a|x)\}$ and with $\mu(x,y) = \beta_{y,y'} \mu(x,y')$ is achieved by a local box $\sum_k p_k \{P^{A}_{k}(a|x)\} \otimes \{P^{B}_{k}(b|y)\}$ with $\sum_k p_k P^{A}_k(a|x) = \tilde{P}^{A}(a|x) \; \; \forall a, x$.
\end{obs}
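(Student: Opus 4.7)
The plan is to mirror the construction in the proof of Lemma~\ref{local-marg-xor}, with the stronger structural assumption of Definition~\ref{def-con2} playing the role that the $\mathbb{Z}_2$ output-flip symmetry played in the XOR case. Specifically, Definition~\ref{def-con2} supplies, for every pair $(a^*, x^*)$, a local deterministic box $D_{a^*, x^*}$ that achieves the optimum $\mathit{\underline{R}}^{nc}_L(\mathbb{\underline{B}}^{U, nc})$ and outputs $a^*$ on input $x^*$. This family---one box per $(a, x)$ pair---plays the role of the pair $(P_D^1, P_D^2)$ used in the XOR argument, and is rich enough to span the Alice-marginal degrees of freedom. A preliminary observation is that the hypothesis $\mu(x,y) = \beta_{y,y'} \mu(x,y')$ forces the product form $\mu(x,y) = f(x) g(y)$ with $f(x) = \mu(x,1)$ and $g(y) = \beta_{y,1}$, and this product form is what will allow Bob's response to be tailored per input $y$ without affecting Alice's side.

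Next I would take any no-signaling box $P^*$ maximizing $\mathcal{\underline{B}}^{U, nc}$ under the marginal constraint $\sum_b P^*(a,b|x,y) = \tilde{P}^A(a|x)$, and decompose it into vertices of the no-signaling polytope, $\{P^*\} = \sum_{v_L} p^*_{v_L} \{P_{v_L}\} + \sum_{v_{NL}} p^*_{v_{NL}} \{P_{v_{NL}}\}$. For each nonlocal vertex $P_{v_{NL}}$, I would construct a local replacement $P^{\to L}_{v_{NL}}$ with the same Alice marginal $P^A_{v_{NL}}(a|x)$ and Bell value at least as large on $\mathcal{\underline{B}}^{U, nc}$. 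The replacement is built from the family $\{D_{a^*, x^*}\}$: for each Bob input $y$, form a convex combination $\sum_{a^*, x^*} q_{a^*, x^*}(y) \{D_{a^*, x^*}\}$ with $y$-dependent weights chosen so that the resulting Alice marginal equals $P^A_{v_{NL}}(a|x)$ and Bob's conditional output distribution equals $P^B_{v_{NL}}(b|y)$. Since the Bell value is linear and each $D_{a^*, x^*}$ individually attains the local optimum $\mathit{\underline{R}}^{nc}_L$, and since only the Alice marginal and Bob's $y$-conditional output enter the value of $\mathcal{\underline{B}}^{U, nc}$ on the replacement, such a substitution cannot decrease the Bell value; reassembling $\{P^*_L\} \defeq \sum_{v_L} p^*_{v_L} \{P_{v_L}\} + \sum_{v_{NL}} p^*_{v_{NL}} \{P^{\to L}_{v_{NL}}\}$ then yields a local box with Alice marginal $\tilde{P}^A$ and Bell value at least $\mathbb{\underline{B}}^{U, nc}.\{P^*\}$, which is what the observation claims.

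The main obstacle is the construction of the weights $q_{a^*, x^*}(y)$. Unlike the XOR case of Lemma~\ref{local-marg-xor}, where the Jones--Masanes~\cite{Jones} characterization of nonlocal vertices (entries in $\{0, 1/2, 1\}$) guided the replacement explicitly, no such clean vertex description is available for $d$-outcome unique games, and one must argue more abstractly. Establishing existence of suitable weights will use both the richness of $\{D_{a^*, x^*}\}$ (providing enough deterministic ``basis'' strategies to span arbitrary Alice marginals on each input) and the product form of $\mu$ (which is what decouples the $y$-bookkeeping from Alice's marginal and permits the weights to depend on $y$). Verifying that these two degrees of freedom jointly suffice to match both $P^A_{v_{NL}}$ at every $x$ and $P^B_{v_{NL}}(\cdot|y)$ at every $y$ is the crux of the argument, and the fact that the Proposition treats $m_A = 2$ as a separately proved case suggests that the delicate step is exactly this simultaneous-matching of marginals when the $D_{a^*,x^*}$ strategies assign correlated outputs on inputs $x \neq x^*$.
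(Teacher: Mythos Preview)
Your plan has a genuine gap at exactly the point you flag as ``the crux.'' You propose to replace each nonlocal vertex by a convex combination $\sum_{a^*,x^*} q_{a^*,x^*}\,\{D_{a^*,x^*}\}$ of the optimal deterministic boxes supplied by Definition~\ref{def-con2}, with weights chosen so that Alice's marginal matches $P^A_{v_{NL}}$. But the convex hull of the Alice marginals of the $D_{a^*,x^*}$ is far too small for this. Concretely: once one carries out the relabeling that the paper uses (and that your sketch omits), the only deterministic strategies saturating $\mathcal{\underline{B}}^{U,nc}$ are the $d$ ``constant'' strategies $P_D^i$ in which Alice outputs $i$ on every $x$; the boxes $D_{a^*,x^*}$ with different $x^*$ collapse to the same $P_D^{a^*}$. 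Convex combinations of these yield only $x$-independent Alice marginals, so a generic $P^A_{v_{NL}}(a|x)$ cannot be matched this way. Your attempt to rescue this by letting the weights $q_{a^*,x^*}(y)$ depend on $y$ does not produce a local box: with $y$-dependent mixing weights over strategies whose Alice parts differ, the resulting Alice marginal depends on $y$ unless you impose this as an extra constraint, and even then the object you build is not of the local form $\sum_k p_k\,P^A_k\otimes P^B_k$ with $y$-independent $p_k$. (The remark that ``only the Alice marginal and Bob's $y$-conditional output enter the value'' is also off: for a unique game the value depends on the joint probabilities $P(a,\sigma_{x,y}(a)\,|\,x,y)$, not on the two marginals separately.)

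The paper's route is different and avoids the marginal-matching problem altogether. It first uses the family $\{D_{a^*,x^*}\}$ not as building blocks for a mixture, but to exhibit a local relabeling under which $\mathcal{\underline{B}}^{U,nc}$ becomes the identity game ($\sigma_{x,y}=\mathrm{id}$ for all $x,y$ in the non-contradictory part). For the identity game with $\mu(x,y)=f(x)g(y)$ (your product-form observation is correct and is used here), the objective $\sum_y g(y)\big[\sum_x f(x)\sum_a P(a,a|x,y)\big]$ decouples across $y$ while the Alice-marginal constraint is the same for every $y$; hence the optimizer may be taken to satisfy $P(a,b|x,y_1)=P(a,b|x,y_2)$ for all $y_1,y_2$. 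Such a $y$-independent no-signaling box has effectively a single Bob input and is therefore local, with the required Alice marginal built in by construction. The identity-game reduction is the idea your sketch is missing.
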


\begin{proof}
By definition, $\mathcal{\underline{B}}^{U, nc}$ without any marginal constraints is saturated by a local deterministic box $\{P_D(a,b|x,y)\} = \{P^{A}_D(a|x)\} \otimes \{P^{B}_D(b|y)\}$ for every outcome $a$ for every setting $x$ of Alice, i.e. for every $a^* \in [d], x^* \in [m_A]$ there exists a deterministic box that outputs $P^{A}_D(a^*|x^*) = 1$ and saturates the Bell expression. 

Consider the $d$ deterministic boxes $P_D^i$ ($i \in [d]$) that output with certainty the outcome $a^1_i \in [d]$ for input $x = 1$ and saturate the expression $\mathcal{\underline{B}}^{U, nc}$. Due to the fact that they win this part of the unique game with certainty, each box $P_D^i$ outputs a different outcome for all the other settings of Alice and settings of Bob, i.e. 
\begin{eqnarray}
&&a^x_i \neq a^{x}_{i'} : \; \; \; i \neq i', x \neq 1, \nonumber \\ 
&&b^y_i \neq b^{y}_{i'} : \; \; \; i \neq i', y \in [m_B - C^{(s)}].
\end{eqnarray}
This implies that there exists a local relabeling of the outputs by Bob and for other settings by Alice such that the boxes $P_D^i$ can therefore be rewritten as deterministically returning $a = b = i$ for all $x \in [m_A], y \in [m_B - C^{(s)}]$, i.e.
\begin{equation}
P_D^i(a=i|x) = 1, \; \; P_D^i(b=i|y) = 1 \; \; \forall x, y. 
\end{equation}
Therefore, winning with certainty the part of the game corresponding to $\mathbb{\underline{B}}^{nc}$ is equivalent up to a local relabeling to winning a game that only involves the identity permutation, i.e. with $\sigma_{x,y}(a) = a \; \; \forall x \in [m_A], y \in [m_B - C]$. This implies that we only need to show the statement for the game involving the identity permutation $\mathbb{\underline{B}}^{nc}_{\openone}$. 

For the identity game $\mathbb{\underline{B}}^{nc}_{\openone}$ with $\mu(x,y) = \beta_{y,y'} \mu(x,y')$ for parameters $\beta_{y,y'}$ that do not depend on $x$, it can be readily seen that the maximization over all no-signaling boxes with fixed marginal $\{\tilde{P}^A(a|x)\}$ is achieved by a box $\{P_L(a,b|x,y)\}$ with $$\{P_L(a, b|x, y= y_1)\} = \{P_L(a,b|x, y = y_2)\} \; \; \forall y_1, y_2.$$ Such a box is manifestly local, i.e. $$\{P_L(a,b|x,y)\} = \sum_k p_k \{P^{A}_{L, k}(a|x)\} \otimes \{P^{B}_{L, k}(b|y)\}$$ with $$\sum_k p_k P^{A}_{L, k}(a|x) = \tilde{P}^{A}(a|x) \; \; \forall a, x.$$ This proves the Observation \ref{obs-local-marg}.
\end{proof}

Now, let $\{\underline{Q}(a, b^1, \dots, b^{C+1} | x, y^1, \dots, y^{C+1})\}$ denote the box that maximizes the expression $\mathbb{\underline{B}}_i$ over all no-signaling boxes. Since all parties except the $i^{th}$ party measure a single setting for this expression, this box takes the form
\begin{eqnarray}
&&\{\underline{Q}(a,b^1, \dots, b^{C+1} | x, y^1, \dots,  y^{C+1})\} = \nonumber \\ && \{\underline{Q}^{1}(b^1|  y^1)\} \otimes \dots  \{\underline{Q}^{i}(a,b^i|x, y^i)\}  \dots \otimes \{\underline{Q}^{C+1}(b^{C+1} | y^{C+1})\}. \nonumber 
\end{eqnarray}
The value of the expression $\mathbb{\underline{B}}^{nc}_i$ depends on $\{\underline{Q}^{i}(a,b^i|x,y^i)\}$ while the value of $\mathbb{\underline{B}}^{c}_i$ depends on the rest of the boxes together with Alice's marginal $\{\underline{Q}^{A}(a|x)\}$. Now, by Observation (\ref{obs-local-marg}), we know that the box $\{\underline{Q}^{i}(a,b^i|x,y^i)\}$ can be replaced by a local box 
\begin{equation}
\{\underline{Q}^{i}_L(a,b^i|x,y^i)\} = \sum_k q_k \underline{Q}^{A}_{L, k}(a|x) \otimes \underline{Q}^{B^i}_{L, k}(b^i|y^i)
\end{equation}
with the same marginal on Alice's side, $\sum_k q_k \underline{Q}^{A}_{L, k}(a|x) = Q^{A}(a|x)$. Therefore, the maximum value of $\mathbb{\underline{B}}_i.\{P(a, b^1, \dots, b^{C+1}|x, y^1, \dots, y^{C+1})\}$ over all no-signaling boxes is achieved at a local box. The above together with Eq.(\ref{local-eq2}) shows that the maximum over all no-signaling boxes of $\sum_i \mathcal{\underline{B}}^U_i = \sum_i \mathcal{B}^U_{AB^{i}}$ cannot be larger than $(C^{(s)}+1) \mathit{R}_L(\mathbb{B}^U_{AB})$. This proves the statement for arbitrary $m_A, m_B, d$ and restricted $\mu(x,y)$.

We now proceed to prove the second statement in the Proposition, concerning unique games with $m_A = 2$ and arbitrary $m_B,  d, \mu(x,y)$ denoted by $\mathbb{B}^{m_A = 2}$ where we have dropped the superscript $U$ for simplicity. As before, $\sum_i \mathbb{B}^{m_A = 2}_{AB^{i}}$ can be rewritten as $\sum_i \mathbb{\underline{B}}^{m_A =2}_i$ with $\mathbb{\underline{B}}^{m_A = 2}_i = \mathbb{\underline{B}}^{m_A = 2, nc}_i + \mathbb{\underline{B}}^{m_A = 2, c}_i$ with the corresponding expressions given as in Eq. (\ref{unique-no-con}) and Eq. (\ref{unique-con}) with $m_A = 2$. The difficulty is in proving the analogous statement to Observation (\ref{obs-local-marg}) which does not hold for arbitrary $m_A$ without the constraint on $\mu(x,y)$ which is why we restrict to the case $m_A = 2$.

\begin{lemma}
\label{obs-local-marg2}
The maximum value of $\mathcal{\underline{B}}^{m_A = 2, nc}$ with
\begin{eqnarray}
\label{unique-no-con2}
&&\mathcal{\underline{B}}^{m_A = 2, nc} \defeq  \nonumber \\ &&\sum_{x=1}^{2} \sum_{y=1}^{m_B - C^{(s)}} \mu(x,y) \sum_{a, b = 1}^{d} V(a, \sigma_{x,y}(a)| x, y) P(a, b | x, y) \nonumber \\
\end{eqnarray}
over all no-signaling boxes under the constraint of a fixed marginal distribution $Q^{A}(a|x)$ is achieved by a local box $\sum_k p_k P^{A}_{k}(a|x) \otimes P^{B}_{k}(b|y)$ with $\sum_k p_k P^{A}_k(a|x) = Q^{A}(a|x)$ for arbitrary $m_B, d, \mu(x,y)$. 
\end{lemma}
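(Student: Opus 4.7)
\textbf{Proof plan for Lemma \ref{obs-local-marg2}.} The strategy is to first exploit the strong contradiction number hypothesis to reduce, by a local relabeling of outputs, to the case $\sigma_{x,y} = \mathrm{id}$ for all $x \in \{1, 2\}$ and $y \in [m_B - C^{(s)}]$, and then, for this identity game with $m_A = 2$, to construct an explicit local box whose value matches the no-signaling upper bound under the fixed Alice-marginal constraint $Q^A(\cdot|x)$.

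For the reduction, the strong contradiction hypothesis provides, for each $a^* \in [d]$, a local deterministic box saturating $\mathcal{\underline{B}}^{U, nc}$ in which Alice outputs $a^*$ on $x = 1$ and some $\alpha^*$ on $x = 2$, with Bob's output $b^*_y$ on $y \in [m_B - C^{(s)}]$. Perfect winning forces $b^*_y = \sigma_{1, y}(a^*) = \sigma_{2, y}(\alpha^*)$, i.e.\ $\alpha^* = \sigma_{2, y}^{-1}\sigma_{1, y}(a^*)$; consistency across $y$ then forces $\tau \defeq \sigma_{2, y}^{-1}\sigma_{1, y}$ to be a fixed permutation of $[d]$ independent of $y$. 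A per-$y$ relabeling of Bob's outputs by $\sigma_{1, y}^{-1}$ sends $\sigma_{1, y} \to \mathrm{id}$ and $\sigma_{2, y} \to \tau^{-1}$; a subsequent relabeling of Alice's outputs on $x = 2$ by $\tau$ makes $\sigma_{2, y} \to \mathrm{id}$ as well. These local relabelings merely permute the labels in the arbitrary fixed marginal $Q^A$, so we may assume the identity game throughout.

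For the reduced identity game, $\mathcal{\underline{B}}^{m_A = 2, nc}$ decomposes as $\sum_y \bigl[\mu(1, y) P(a = b|1, y) + \mu(2, y) P(a = b|2, y)\bigr]$, and the no-signaling optimization decouples across $y$ through Bob's marginal $P^B(\cdot|y)$, which is independent of $x$. The tight bound $\sum_a P(a, a|x, y) \leq 1 - \mathrm{TV}\bigl(Q^A(\cdot|x), P^B(\cdot|y)\bigr)$ reduces the per-$y$ optimum to minimizing $\mu(1, y)\,\mathrm{TV}(Q^A(\cdot|1), P^B) + \mu(2, y)\,\mathrm{TV}(Q^A(\cdot|2), P^B)$ over $P^B$. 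A direct triangle-inequality argument shows this minimum equals $\min(\mu(1, y), \mu(2, y))\,\mathrm{TV}(Q^A(\cdot|1), Q^A(\cdot|2))$, attained at $P^B \in \{Q^A(\cdot|1), Q^A(\cdot|2)\}$.

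Finally, to match this upper bound with a local box, take a coupling $p(a_1, a_2)$ of $Q^A(\cdot|1)$ and $Q^A(\cdot|2)$ with maximal diagonal mass $\sum_a p(a, a) = 1 - \mathrm{TV}(Q^A(\cdot|1), Q^A(\cdot|2))$, which always exists. Using $(a_1, a_2)$ as shared randomness, have Alice output $a_x$ on input $x$, and for each $y$ have Bob output $a_{x^*(y)}$ where $x^*(y) \in \arg\max_{x}\mu(x, y)$. This is a local box of the required product-convex form with Alice marginal $Q^A(a|x)$, and a one-line computation confirms that it attains $\sum_{x, y}\mu(x, y) - \mathrm{TV}(Q^A(\cdot|1), Q^A(\cdot|2))\sum_y \min(\mu(1, y), \mu(2, y))$, matching the no-signaling upper bound. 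The main subtlety lies in the optimization over $P^B$ and in choosing the appropriate coupling; the relabeling step, while notationally heavy, is purely bookkeeping.
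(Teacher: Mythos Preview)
Your proof is correct and, after the shared reduction to the identity game, takes a genuinely different and cleaner route than the paper.

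Both arguments begin by using the strong contradiction hypothesis to relabel to $\sigma_{x,y}=\mathrm{id}$; your derivation of this step (forcing $\tau=\sigma_{2,y}^{-1}\sigma_{1,y}$ to be $y$-independent) is a bit more explicit than the paper's. From there the paths diverge. The paper parametrizes local strategies for a single Bob setting by coefficients $p^{b}_{a_1,a_2}$, introduces the auxiliary quantities $s_{a_1,a_2}=p^{a_1}_{a_1,a_2}+p^{a_2}_{a_1,a_2}$ and $g_{a_1,a_2}$, argues that the per-$y$ optimum can always be realized with a common set $\{s_{a_1,a_2}\}$, and then runs an inductive splitting construction (in lexicographic order over $\{b_1,\dots,b_k\}$) to glue the per-$y$ optima into one global local box $p^{b_1,\dots,b_{m_B-C^{(s)}}}_{a_1,a_2}$. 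Your argument instead observes that, with $Q^A$ fixed, the no-signaling optimization genuinely decouples across $y$ through the free choice of $P^B(\cdot\,|y)$ and the two couplings; invokes the coupling characterization of total variation to get the tight per-$y$ bound $\mu(1,y)+\mu(2,y)-\min_x\mu(x,y)\,\mathrm{TV}\bigl(Q^A(\cdot|1),Q^A(\cdot|2)\bigr)$; and then matches the summed bound with a single local box built from one optimal coupling $p(a_1,a_2)$ of $Q^A(\cdot|1)$ and $Q^A(\cdot|2)$, with Bob on input $y$ copying $a_{x^*(y)}$.

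What each buys: your approach gives a closed-form optimal value and a one-line local strategy, avoiding the paper's bookkeeping with $\{s_{a_1,a_2}\},\{g_{a_1,a_2}\}$ and the inductive gluing. The paper's combinatorial decomposition is heavier but makes the convex structure of the optimizing local boxes more explicit. Both arguments hinge essentially on $m_A=2$: in your version the triangle inequality for $\mathrm{TV}$ and the single-coupling construction both fail for three or more Alice settings, which mirrors the paper's remark that the statement does not extend to $m_A>2$ for general $\mu$.
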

\begin{proof}
To prove the statement, we shall show that the maximization of $\mathcal{\underline{B}}^{m_A = 2, nc}$ can be done for a single setting $y$ of Bob, and that the maxima for different $y$ result in compatible local boxes. As seen in the first part of the proof, it is sufficient to consider the case that the unique game without contradictions only involves the identity permutation, i.e., $\sigma_{x,y}(a) = a \; \; \forall x \in [m_A], y \in [m_B - C]$.

Let us consider the maximization over no-signaling boxes of the expression with a single setting for Bob, i.e. $$ \sum_{x=1}^{2} \mu(x,y=1) \sum_{a, b = 1}^{d} V(a, a| x, y=1) P(a, b | x, y=1)$$ for fixed marginals $Q^A(a|x)$. This is naturally maximized by a local box which can be expressed as a convex combination of deterministic boxes. Denote by $p^{b}_{a_1, a_2}$ the coefficient in this convex decomposition of the deterministic box that outputs with certainty $a_1, a_2$ for the two settings of Alice and $b$ for the single setting of Bob. 
Denoting $\mu(1,1) =: w_1$ and $\mu(2,1) =: w_2$, the value of the expression with local boxes is given by
\begin{eqnarray}
\label{identity-game}
&&\sum_{x=1}^{2} \sum_{a, b = 1}^{d} w_x V(a, a|x,y=1) P(a, b | x, y) =  \nonumber \\ &&\sum_{a = 1}^{d} p^{a}_{a,a} (w_1 + w_2) + \sum_{a_1 = 1}^{d} \sum_{\substack{a_2 = 1 \\ a_2 \neq a_1}}^{d} \left(p^{a_1}_{a_1,a_2} w_1 +  p^{a_1}_{a_2,a_1} w_2\right), \nonumber
\end{eqnarray}
where we have used the fact that $p^{b}_{a_1, a_2} = 0$ for $b \neq a_1, b\neq a_2$ in this maximization.
Since the box satisfies the marginal constraints, the following equalities hold
\begin{eqnarray}
&&Q^A(a|x=1) = \sum_{b, a_2} p^{b}_{a,a_2}, \nonumber \\
&&Q^A(a|x=2) = \sum_{b, a_1} p^{b}_{a_1,a}.
\end{eqnarray}
For the identity game, it is readily seen that to achieve the maximum, we should have
\begin{equation}
p^{a}_{a,a} = \min_{x} Q^A(a|x)
\end{equation}
due to the fact that this term appears with the highest weight $w_1 + w_2$ in Eq.(\ref{identity-game}).
The marginal constraints can thus be written explicitly as
\begin{eqnarray}
\label{marginal-cons}
&& Q^A(a|1) - \min_{x} Q^A(a|x) = \sum_{\substack{a_2=1 \\ a_2 \neq a}}^{d} \left( p^{a}_{a, a_2} + p^{a_2}_{a, a_2} \right), \nonumber \\
&& Q^A(a|2) - \min_{x} Q^A(a|x) =\sum_{\substack{a_1=1 \\ a_1 \neq a}}^{d} \left( p^{a}_{a_1, a} + p^{a_1}_{a_1,a} \right).
\end{eqnarray}
Denoting $s_{a_1, a_2} \defeq p^{a_1}_{a_1, a_2} + p^{a_2}_{a_1, a_2}$, we see that the sum 
\begin{equation}
\sum_{\substack{a_1, a_2 = 1 \\ a_1 \neq a_2}}^{d} s_{a_1, a_2} = 1 - \sum_{a=1}^{d} \min_{x} Q^A(a|x)
\end{equation}
is fixed for a given marginal box $\{Q^A(a|x)\}$.
Therefore, we can rewrite the value of the Bell expression with single setting for Bob by fixing the set of coefficients $\{s_{a_1, a_2}\}$ to satisfy Eq.(\ref{marginal-cons}) and maximizing over a set of parameters $\{g_{a_1, a_2}\}$ with $0 \leq g_{a_1, a_2} \leq 1$ as 
\begin{eqnarray}
&&\sum_{x=1}^{2} \sum_{a, b = 1}^{d} w_x V(a, a| x, y=1) P(a, b | x, y=1) = \nonumber \\ &&\sum_{a = 1}^{d} \min_{x} Q^A(a|x) (w_1 + w_2) + \sum_{\substack{a_1, a_2=1 \\a_1 \neq a_2}}^{d} s_{a_1,a_2} w_2 \nonumber \\ && \qquad \qquad \qquad \qquad +\max_{\{g_{a_1, a_2}\}} \sum_{\substack{a_1, a_2=1 \\a_1 \neq a_2}}^{d} g_{a_1, a_2} s_{a_1, a_2} (w_1 - w_2). \nonumber 
\end{eqnarray}
Note that the same value of $$\sum_{\substack{a_1, a_2 = 1 \\ a_1 \neq a_2}}^{d} \Big( g_{a_1, a_2} s_{a_1, a_2} w_1 + (1 - g_{a_1, a_2}) s_{a_1, a_2} w_2 \Big)$$ for fixed marginals can be obtained for any set $\{s_{a_1, a_2}\}$ satisfying the marginal constraints Eq.(\ref{marginal-cons}) by appropriately choosing $g_{a_1, a_2}$. In other words, any value obtained for a set $\{s_{a_1, a_2}\}$ satisfying Eq.(\ref{marginal-cons}) and a set of coefficients $\{g_{a_1, a_2}\}$ can equally be obtained for any other set $\{s^*_{a_1, a_2}\}$ satisfying Eq.(\ref{marginal-cons}) by choosing $\{g^*_{a_1, a_2}\}$ to solve the system of equations
\begin{equation}
\sum_{\substack{a_1, a_2 = 1 \\ a_1 \neq a_2}}^{d} g^*_{a_1, a_2} s^*_{a_1, a_2}  = \sum_{\substack{a_1, a_2 = 1 \\ a_1 \neq a_2}}^{d} g_{a_1, a_2} s_{a_1, a_2} \; \; \forall \{a_1, a_2\}
\end{equation}
This system always has a solution since 
\begin{equation}
\sum_{\substack{a_1, a_2 = 1 \\ a_1 \neq a_2}}^{d} s^*_{a_1, a_2} = \sum_{\substack{a_1, a_2 = 1 \\ a_1 \neq a_2}}^{d} s_{a_1, a_2}
\end{equation}
and $0 \leq g_{a_1, a_2} \leq 1$ and $0 \leq g^*_{a_1, a_2} \leq 1$. The above arguments can be repeated for every setting $y$ of Bob and we see that in the expression $$\sum_{x=1}^{2} \sum_{y=1}^{m_B - C^{(s)}} \mu(x,y) \sum_{a, b = 1}^{d} V(a, a| x, y) P(a, b | x, y)$$ the maximum value for each setting of Bob is achieved by a local box whose convex decomposition can be written in terms of a fixed set $\{s_{a_1, a_2}\}$. It remains to show that this implies that the maximum of the whole expression is achieved by a local box.  

Let us use the label $p^{b}_{a_1, a_2}(y)$ for the coefficients of the deterministic boxes that appear in the maximization procedure of the Bell expression restricted to setting $y$ as above. We now construct a local box which gives the maximum value for the whole expression and use the notation $p^{b_1, \dots, b^{m_B - C^{(s)}}}_{a_1, a_2}$ for the coefficients of the deterministic boxes in its convex combination. The above arguments show that $p^{a}_{a,a}(y) = \min_{x} Q^{A}(a|x) \; \; \forall y$, so we may set 
$p^{a, \dots, a}_{a,a} = \min_{x}Q^{A}(a|x)$. For $a_1 \neq a_2$, we proceed by induction. For given $s_{a_1, a_2}$ assume that we have constructed a set of deterministic boxes with coefficients $p^{b_1, \dots, b_k}_{a_1, a_2}$ ($b^i \in \{a_1, a_2\}$ arranged in lexicographic order) for fixed $k$ such that for each $y \in [k]$, the boxes maximize the Bell expression with setting $y$ alone. Note that this implies that $$\sum_{b_1, \dots, b_k} p^{b_1, \dots, b_k}_{a_1, a_2} = s_{a_1, a_2}.$$ For any $p^{a_1}_{a_1, a_2}(k+1)$ and $p^{a_2}_{a_1, a_2}(k+1)$ satisfying $ p^{a_1}_{a_1, a_2}(k+1) + p^{a_2}_{a_1, a_2}(k+1) = s_{a_1, a_2}$ we now construct suitable $p^{b_1, \dots, b_{k+1}}_{a_1, a_2}$ such that the Bell expression for setting $k+1$ is also maximized in addition to that for the previous $k$ settings. To do this, we find suitable $0 \leq \gamma \leq 1$ and $\tilde{p}^{b_1, \dots, b_k}_{a_1, a_2}$ such that for a set $W$ of coefficients lower than ${\tilde{b}_1, \dots, \tilde{b}_k}_{a_1, a_2}$ in the lexicographic order ($W \defeq \{\{b_1, \dots, b_k\} : \{b_1, \dots, b_k\} <_L \{\tilde{b}_1, \dots, \tilde{b}_k\}\}$) we have
\begin{eqnarray}
&&\sum_{\{b_1, \dots, b_k\} \in W} p^{b_1, \dots, b_k}_{a_1, a_2} + \gamma {p}^{\tilde{b}_1, \dots, \tilde{b}_k}_{a_1, a_2} = p^{a_1}_{a_1, a_2}(k+1) \nonumber \\
&&\sum_{\substack{\{b_1, \dots, b_k\} \in \overline{W} \\ \{b_1, \dots, b_k\} \neq \{\tilde{b}_1, \dots, \tilde{b}_k\}}} p^{b_1, \dots, b_k}_{a_1, a_2} + (1 - \gamma) p^{\tilde{b}_1, \dots, \tilde{b}_k}_{a_1, a_2} = p^{a_2}_{a_1, a_2}(k+1). \nonumber
\end{eqnarray}
We can now finish the construction of the local box by assigning the coefficients $p^{b_1, \dots, b_{k+1}}_{a_1, a_2}$ as
\begin{widetext}
\begin{displaymath}
   p^{b_1, \dots, b_k, b_{k+1} }_{a_1, a_2} = \left\{
     \begin{array}{lr}
       p^{b_1, \dots, b_k}_{a_1, a_2} & : (b_{k+1} = a_1) \wedge \{b_1, \dots, b_k\} \in W\\
       \gamma p^{\tilde{b}_1, \dots, \tilde{b}_k}_{a_1, a_2}& : (b_{k+1} = a_1) \wedge b_i = \tilde{b}_i \; \; \forall i \in [k] \\
       (1 - \gamma) p^{\tilde{b}_1, \dots, \tilde{b}_k}_{a_1, a_2}& : (b_{k+1} = a_2) \wedge b_i = \tilde{b}_i \; \; \forall i \in [k] \\
       p^{b_1, \dots, b_k}_{a_1, a_2} & : (b_{k+1} = a_2) \wedge \{b_1, \dots, b_k\} \in \overline{W} \setminus \{ \tilde{b}_1, \dots, \tilde{b}_k\} \\
     \end{array}
   \right.
\end{displaymath} 
\end{widetext}
Following this procedure inductively for all $y$, we construct a local box that maximizes the Bell value for each setting $y \in [m_B - C^{(s)}]$ of Bob. Therefore, a local box maximizes the identity game expression in Eq. (\ref{unique-no-con2}) with the given marginal distribution $\{Q^A(a|x)\}$. 

\end{proof}
Now following the previous proof, we see that any box 
\begin{eqnarray}
&&\{\underline{Q}(a,b^1, \dots, b^{C+1} | x, y^1, \dots,  y^{C+1})\} = \nonumber \\ &&\{\underline{Q}^1(b^1| y^1)\} \otimes \dots  \{\underline{Q}^i(a,b^i|x, y^i)\}  \dots \otimes \{\underline{Q}^{C+1}(b^{C+1} | y^{C+1})\} \nonumber 
\end{eqnarray}
that achieves the maximal no-signaling value of the unique game expression $\mathcal{\underline{B}}^{m_A =2}_i$
with arbitrary $d, m_B, \mu(x,y)$ can be replaced by a local box that achieves the same value. This is done by replacing the part of the box $\{\underline{Q}^i(a,b^i|x, y^i)\}$ with a local box from the construction in Lemma \ref{obs-local-marg2} which achieves the same value for $\mathcal{\underline{B}}^{m_A = 2, nc}$ while at the same time attaining the same marginal $\{Q^A(a|x)\}$ so that the value of $\mathcal{\underline{B}}^{m_A = 2, c}$ is also left unchanged. Therefore, the maximum value of $\sum_i \mathbb{B}^{U, m_A = 2}_{AB^{i}}$ over all no-signaling boxes cannot be larger than $(\mathit{C}^{(s)} + 1)\mathit{R}_L(\mathbb{B}^{U, m_A = 2})$. Note that the statement in Lemma (\ref{obs-local-marg2}) does not hold for arbitrary marginal distributions for larger number of settings for Alice so the analogous monogamy relation for $m_A > 2$ may not hold in general. This ends the proof of Proposition 2.
\end{proof}

\textbf{Observation 1.}
\textit{For any unique game $\mathbb{B}^U$ with parameters $m_A, m_B, d$ and 
\begin{equation}
\label{nosig-adv}
\mathit{R}_{NS}(\mathbb{B}^U) > \mathit{R}_L(\mathbb{B}^U)
\end{equation}
and any no-signaling box $\{P(a,b,c|x,y,z)\}$ with $a,b,c \in [d]$ and $x \in [m_A], y,z \in [m_B]$, we have
\begin{equation}
\label{weak-mono-unique}
(\mathbb{B}_{AB}^U + \mathbb{B}_{AC}^U).\{P(a,b,c|x,y,z)\} < 2 \mathit{R}_{NS}(\mathbb{B}^U).
\end{equation}}
\begin{proof}
Consider the general unique game inequality
\begin{eqnarray}
&&\mathcal{B}_{AB}^{U} \defeq \nonumber \\ &&\sum_{x=1}^{m_A} \sum_{y=1}^{m_B} \mu(x,y) \sum_{a, b = 1}^{d} V(a, \sigma_{x,y}(a)| x, y) P(a, b | x, y) \leq \mathit{R}_L \nonumber \\
\end{eqnarray}
with $\mathit{R}_{NS}(\mathbb{B}^U) > \mathit{R}_L(\mathbb{B}^U)$. Being a non-trivial game $\mathbb{B}^U$ has at least one setting $y^*$ of Bob which leads to a contradiction with local realistic predictions. We may rewrite $(\mathcal{B}_{AB}^U + \mathcal{B}_{AC}^U)$ as $\sum_{i=1,2} \mathcal{\underline{B}}^U_i$ with $\mathcal{\underline{B}}^U_i = \mathcal{\underline{B}}^{U, m_B-1}_i + \mathcal{\underline{B}}^{U, y^*}_i$.
Here
\begin{equation}
\mathcal{\underline{B}}^{U, m_B-1}_1 = \sum_{x=1}^{m_A} \sum_{\substack{y =1 \\ y \neq y^*}}^{m_B} \mu(x,y) \sum_{a, b = 1}^{d} V(a, \sigma_{x,y}(a)| x, y) P(a, b | x, y)
\end{equation}
is an expression containing the $m_B - 1$ settings other than $y^*$ in the Bell expression $\mathbb{B}_{AB}^U$ evaluated on the Alice-Bob marginal $\{P(a,b|x,y)\}$ and
\begin{equation}
\mathcal{\underline{B}}^{U, y^*}_1 = \sum_{x=1}^{m_A} \mu(x,y^*) \sum_{a, c = 1}^{d} V(a, \sigma_{x,y^*}(a)| x, y^*) P(a, c | x, y^*).
\end{equation}
is an expression with the remaining setting $y^*$ evaluated on the Alice-Charlie marginal $\{P(a,c|x,z)\}$.
$\mathcal{\underline{B}}^{U, m_B-1}_2$ and $\mathcal{\underline{B}}^{U, y^*}_2$ are defined similarly by interchanging the roles of Bob and Charlie. 

We have seen that the maximum algebraic value $\mathit{R}_{NS}(\mathcal{B}_{AB}^U)$ with some fixed $\{\sigma_{x,y}\}$ can be achieved in no-signaling theories by the box with non-zero entries $P(a, \sigma_{x,y}(a) | x,y) = \frac{1}{d} \; \; \forall a \in [d]$. We now show that this value cannot be reached for $\mathcal{\underline{B}}^U_1$ and consequently that $\sum_{i=1,2} \mathcal{\underline{B}}^U_i$ cannot equal $2 \mathit{R}_{NS}(\mathcal{B}_{AB}^U)$ in any no-signaling theory. Without loss of generality, the box maximizing $\mathcal{\underline{B}}^U_1$ can be written as
\begin{equation}
\{Q(a,b,c|x,y,z=y^*)\} = \{Q^{AB}(a,b|x,y)\} \otimes \{Q^{C}(c|z=y^*)\}
\end{equation}
where $\{Q^{C}(c|z=y^*)\}$ is a deterministic box that returns a fixed output $c^*$ upon input $z = y^*$. Since the Bell expression involves a fixed outcome of Alice $a_x^* = \sigma_{x,y^*}^{-1}(c^*)$ corresponding to $c^*$ for every input $x$, the algebraic value can only be attained if these occur deterministically, i.e. for algebraic violation the box $\{Q^{AB}(a,b|x,y)\}$ should satisfy
\begin{equation}
Q^A(a_x^*|x) = 1 \; \; \; \forall x.
\end{equation}
This in turn implies that for the other settings $y \neq y^*$, in order to have algebraic violation the output $\sigma_{x,y}(a_x^*)$ must be deterministic, or in other words that the box $\{Q^{AB}(a,b|x,y)\}$ is deterministic. But this would imply $\mathit{R}_{L}(\mathbb{\underline{B}}^U_1) = \mathit{R}_{NS}(\mathbb{B}^U_{AB})$ which leads to a contradiction with Eq.(\ref{nosig-adv}) since $\mathit{R}_{L}(\mathbb{\underline{B}}^U_1) = \mathit{R}_{L}(\mathbb{B}^U_{AB})$ (recall that any deterministic strategy for $\underline{B}^U_1$ can be recast as a deterministic strategy for $\mathbb{B}^U_{AB}$ and vice-versa). This shows that 
\begin{equation}
\sum_{i=1,2} \mathcal{\underline{B}}^U_i < 2 \mathit{R}_{NS}(\mathcal{B}_{AB}^U)
\end{equation}
which is equivalent to the assertion in Eq.(\ref{weak-mono-unique}).
\end{proof}

\textbf{Claim 3.}
\textit{There exists a two-party Bell expression $\mathbb{B}$ with $m_A = m_B = 3, d = 4$, associated contradiction numbers $C_{\mathbb{B}} = C^{(s)}_{\mathbb{B}} = 1$, maximum no-signaling value greater than the maximum local value ($\mathit{R}_{NS} > \mathit{R}_{L}$) and a three-party no-signaling box $\{P(a,b,c|x,y,z)\}$ with $a, b, c \in \{1, 2, 3, 4\}$ and $x, y, z \in \{1, 2, 3\}$ such that 
\begin{equation}
(\mathbb{B}_{AB} + \mathbb{B}_{AC}).\{P(a,b,c|x,y,z)\} = 2 \mathit{R}_{NS}.
\end{equation}}

\begin{proof}
The two-party Bell inequality is from \cite{Cabello2} and involves each party measuring one of three settings and obtaining one of four outcomes. The indicator vector $\mathbb{B}$ representing the inequality can be represented in the form in Table \ref{table:counterexample}. 
\begin{table}[t]

\begin{center}
  \begin{tabular}{| l | l | l | c | r | }
    \hline
  &    B &       I       &     II      &      III      \\ \hline
 A  &    & 1 2 3 4  & 1 2 3 4 & 1 2 3 4 \\ \hline
    &1 & + + - - & + + - - & + + - - \\ 
    I&2 & + + - - & -  - + + & - - + + \\ 
  & 3 & - - + + & + + - - & - - + + \\ 
    & 4 & - - + + & - - + + & + + - - \\ \hline
   &1 & + - + - & + - + - & + - + - \\ 
   II&2 & + - + - & - + - + & - + - + \\
   &3 & - + - + & + - + - & - + - + \\
   &4 & - + - + & - + - + & + - + - \\ \hline
   &1 & + - - + & + - - + & - + + - \\
  III&2 & + - - + & - + + - & + - - + \\
   &3 & - + + - & + - - + & + - - + \\
   &4 & - + + - & - + + - & - + + - \\ \hline
  \end{tabular}
\end{center}
\caption{\label{table:counterexample} Table representing the Bell inequality that presents a counterexample to the strong monogamy relation}
\end{table}

Here, the settings of Alice (party $A$) labeled $I, II$ and $III$ are represented vertically with the rows corresponding to the outcomes $1, 2, 3$ and $4$ for each setting. Similarly, the settings of Bob (party $B$) labeled $I, II$ and $III$ are represented horizontally with the columns corresponding to the outcomes $1, 2, 3$ and $4$ for each setting. The entries $+$ and $-$ denote the coefficients appearing in front of the corresponding probabilities in the Bell expression with $\mathbb{B}(a,b,x,y) = 1$ for entry $+$ and $\mathbb{B}(a,b,x,y) = 0$ for the $-$ entry. One can check that the maximum classical value of $\mathbb{B}.\{P(a,b|x,y)\}$ is $8$ while the maximum no-signaling value (which incidentally  \cite{Cabello2} is also the maximum quantum value) is $9$. One can also check that if Bob does not measure setting $III$, so that $x \in \{I, II, III\}$ and $y \in \{I, II\}$, there is a classical box with entries $P(1,1|x,y) = 1$ for all $x, y$
which achieves the maximum no-signaling value of $6$ for the remaining Bell expression. Therefore, the contradiction number for this Bell inequality is one, i.e. $C_{\mathbb{B}} = 1$. Moreover, it is also straightforward to check that for every outcome $a$ and for every setting $x$ of Alice, there exists a deterministic box with $P^A(a|x) = 1$ which achieves the maximum no-signaling value of $6$ for the expression, so that $C^{(s)}_{\mathbb{B}} = 1$ as well.

Crucially however, there is a tripartite no-signaling box $\{P(a,b,c|x,y,z)\}$ with $a, b, c \in \{1, 2, 3, 4\}$ and $x, y, z \in \{I, II, III\}$, with marginals $\{P(a,b|x,y)\}$ and $\{P(a,c|x,z)\}$ each of which leads to the maximum violation of the inequality, i.e. $(\mathbb{B}_{AB} + \mathbb{B}_{AC}).\{P(a,b,c|x,y,z)\} = 18$. The entries of the box $\{P(a,b,c|x,y,z)\}$ are given by:
\begin{widetext}
\begin{displaymath}
   P(a,b,c|x,y,z) = \left\{
     \begin{array}{lr}
       \frac{1}{8} & : (y=z) \wedge (b = c) \wedge  \mathbb{B}(a,b,x,y) = +1\\
       \frac{1}{16} & : (y \neq z) \wedge \mathbb{B}(a,b,x,y) = +1 \wedge \mathbb{B}(a,c,x,z) = +1 \\
       0 & : otherwise
     \end{array}
   \right.
\end{displaymath} 
\end{widetext}
One can check that the box explicitly obeys all the no-signaling constraints in Eq. (\ref{no-signal}). 
Moreover, it can be seen that the marginals $\{P(a,b|x,y)\}$ and $\{P(a,c|x,z)\}$ obey
\begin{displaymath}
   P(a,b(c)|x,y(z)) = \left\{
     \begin{array}{lr}
       \frac{1}{8} & : \mathbb{B}(a,b (c),x,y (z)) = +1\\
        0 & : \mathbb{B}(a, b(c), x, y (z)) = 0. 
     \end{array}
   \right.
\end{displaymath} 
which leads to $\mathbb{B}.\{P(a,b|x,y)\} = \mathbb{B}.\{P(a,c|x,z)\} = 9$.
In other words, both Alice-Bob and Alice-Charlie are able to violate the inequality to its algebraic maximum value within the no-signaling theory. 
\end{proof}

\end{document}